\theoremstyle{change}
\newtheorem{definition}[equation]{Definition}
\newtheorem{theorem}[equation]{Theorem}
\newtheorem{prop}[equation]{Proposition}
\newtheorem{proposition}[equation]{Proposition}
\newtheorem{lemma}[equation]{Lemma}
\newtheorem{cor}[equation]{Corollary}
\newtheorem{example}[equation]{Example}
\newtheorem{remark}[equation]{Remark}
\theoremstyle{nonumberplain}
\newtheorem{proof}{Proof}
\crefname{equation}{equation}{equations}
\crefname{eg}{example}{examples}
\crefname{defn}{definition}{definitions}
\crefname{prop}{proposition}{propositions}
\crefname{thm}{Theorem}{Theorems}
\crefname{lemma}{lemma}{lemmas}
\crefname{cor}{corollary}{corollaries}
\crefname{remark}{remark}{remarks}
\crefname{section}{Section}{Sections}
\crefname{subsection}{Section}{Sections}
\numberwithin{equation}{section}
  \parbox{\textwidth}{\colorbox{gray}{\parbox{\textwidth}{#1#2#3}}\vskip-4pt}}
\tikzset{dot/.style={circle,draw,fill,inner sep=1pt}}
\newcommand\ket[1]{\mid #1 \rangle}
\newcommand\bra[1]{\langle #1 \mid}
\newcommand\setof[1]{\{ #1 \}}
\newcommand\abs[1]{ \mid #1 \mid }
\title{Some Coxeter Groups in Reversible and Quantum Computation}
\author{Jon Aytac, Ammar Husain}
\begin{document}
\maketitle

\abstract

In this article we show how the structure of Coxeter groups are present in gate sets of reversible and quantum computing. These groups have efficient word problems which means that circuits built from these gates have potential to be shortened efficiently. This is especially useful in the case of quantum computing when one does not have the timescale to perform a long series of gates and so one must find a gate scheduling that minimizes circuit depth. As the main example we consider the oracle for 3SAT.

\section{Introduction}

\subsection{Coxeter Groups Background}

In this section let us set up the general notation we will use for the Coxeter groups we will use in both reversible and quantum gates. 

\begin{definition}[Coxeter Group]
A Coxeter group $(W,R)$ is a group $W$ equipped with a presentation by involutions $r_i$ as:

\begin{eqnarray*}
G &=& \langle r_1 \cdots r_n \mid (r_i r_j)^{m_{ij}} = 1 \rangle
\end{eqnarray*}

For $i \neq j$ $m_{ij} \in \mathbb{N}_{\geq 2} \bigsqcup \infty$. $m_{ij}=\infty$ means no relation should be imposed for the order of $r_i r_j$.

This data is encoded in a graph where the vertices are the generators $r_i$. If $m_{ij}>2$, there is an edge between $r_i$ and $r_j$ colored by $m_{ij}$. If $m_{ij}=2$, there is no edge.

\end{definition}

\begin{example}[$S_n$]
The symmetric group with generators $r_i = (i,i+1)$ forms a Coxeter group with Coxeter graph with $n-1$ vertices connected in a line.
\end{example}

\begin{definition}[Crystallographic]
A Coxeter group $(W,R)$ is called crystallographic if there exists a lattice $L$ such that $g (L) \subset L$ for all $g \in W$. This gives the group a representation over the integers.
\end{definition}

\begin{lemma}[Stabilizer Code]
Given a Coxeter graph, finding abelian subgroups $A$ is solved finding independent sets. A stabilizer code is then the intersection of the $+1$ eigenspaces of all these generators.
\end{lemma}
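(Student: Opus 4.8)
The plan is to separate the statement into two halves --- a combinatorial claim identifying abelian subgroups generated by the $r_i$ with independent sets of the Coxeter graph, and a linear-algebra claim that such a commuting family of involutions cuts out a stabilizer code as a joint $+1$ eigenspace --- and to start from the commutation criterion. For $i\neq j$ the generators $r_i,r_j$ commute precisely when $r_ir_jr_i^{-1}r_j^{-1}=1$; since each $r_i$ is an involution this is the same as $(r_ir_j)^2=1$, i.e. $m_{ij}\mid 2$, and as $m_{ij}\geq 2$ it forces $m_{ij}=2$. By the graphical encoding in the definition of a Coxeter group, $m_{ij}=2$ is exactly the absence of an edge between $r_i$ and $r_j$, so adjacency in the graph detects non-commutativity.

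Next I would lift this to subsets. A subset $S\subseteq R$ generates an abelian subgroup if and only if every pair in $S$ commutes, which by the previous step is exactly the requirement that $S$ spans no edge --- that is, $S$ is an independent set. The forward direction is immediate, and conversely a family of pairwise-commuting involutions generates a copy of $(\mathbb{Z}/2)^{|S|}$, which is abelian; maximal abelian subgroups of this reflection type therefore correspond to maximal independent sets. I would flag the one point of scope here: the statement concerns abelian subgroups generated by members of $R$, not arbitrary abelian subgroups (the rotation subgroup of a dihedral factor is abelian but generated by a product $r_ir_j$, not by reflections), so ``solved by finding independent sets'' should be read in this restricted sense.

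For the stabilizer half I would pass to the integer representation supplied by the crystallographic structure, realizing each $r_i$ as a reflection, i.e. an involutive linear operator. Over a field of characteristic $\neq 2$ an involution is diagonalizable with eigenvalues in $\{\pm 1\}$, and a commuting family --- which independence of $S$ guarantees --- is simultaneously diagonalizable, so the ambient space splits into joint eigenspaces indexed by sign characters $A\to\{\pm 1\}$. I would then define the stabilizer code to be the joint $+1$ eigenspace $\bigcap_{g\in S}\ker(g-\id)$; because each reflection fixes a hyperplane this intersection is nonzero, and since it is cut out by generators of $A=\langle S\rangle$ it coincides with $\bigcap_{g\in A}\ker(g-\id)$, matching the stated description.

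The combinatorial equivalence is essentially immediate once the commutation criterion is in hand, so I expect the main obstacle to be the stabilizer half: the real care lies in choosing the representation so that the generators become honest commuting (Hermitian or orthogonal) involutions, ensuring that simultaneous diagonalization is legitimate and that the $+1$ eigenspace taken over the generating set genuinely agrees with the one taken over all of $A$.
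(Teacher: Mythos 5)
The paper contains no proof of this lemma at all: it is stated bare, as something immediate from the definitions, and the following remark only discusses how to choose generators well. So there is no argument of the authors' to compare against line by line; judged on its own, your proposal is correct in substance and supplies exactly the reasoning left implicit. The chain ``$r_i,r_j$ commute $\iff (r_ir_j)^2=1 \iff m_{ij}=2 \iff$ no edge'' identifies commuting pairs with non-adjacent vertices, hence subsets of generators generating abelian subgroups with independent sets, and a commuting family of diagonalizable involutions is simultaneously diagonalizable, so the joint $+1$ eigenspace is well defined and agrees whether taken over the generating set $S$ or over all of $\langle S\rangle$. Your scope caveat is also apt: the lemma can only mean abelian subgroups generated by the $r_i$, since arbitrary abelian subgroups (e.g.\ the rotation subgroup of a dihedral factor) are invisible to independent sets.

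Two points in your write-up deserve correction, though neither sinks the argument. First, routing the stabilizer half through the crystallographic integer representation is both unnecessary and misplaced: in this paper the generators are gates, i.e.\ unitary involutions on $(\mathbb{C}^2)^{\otimes n}$, hence Hermitian, and that Hilbert space is where a stabilizer code must live; the spectral theorem applies there directly, whereas crystallographicity is only established later in the paper and only for the reversible-computing families, so invoking it here inverts the logical order. Second, your claim that the joint $+1$ eigenspace is nonzero ``because each reflection fixes a hyperplane'' does not follow: $k$ hyperplanes in an $n$-dimensional space can intersect trivially once $k\geq n$ (in the Tits representation of the edgeless Coxeter graph on $N$ generators, the joint fixed space of all $N$ reflections is $\{0\}$), and in the Pauli-stabilizer setting nontriviality is exactly the condition $-\id\notin\langle S\rangle$. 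The lemma never asserts nontriviality, so this is only a flawed side remark. Finally, note that the forward direction of your commutation criterion silently uses that the order of $r_ir_j$ in $W$ is exactly $m_{ij}$, not merely a divisor of it --- a standard fact, but one whose proof itself requires the geometric representation.
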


\begin{remark}
A good error correcting code constructed this way starts with finding a good set of generators for your system. They should be the involutions that you expect to happen by mistake in your system (for example, the single (qu)bit flips). The choice of generators below is not a good one because it pushes all the interesting behavior onto the first few indices. However, the Coxeter graph for another choice of generators can easily be built from this data.
\end{remark}

One can do many things with arbitrary Coxeter groups like construct Hecke algebras, Soergel bimodules and more conjecturally categories of unpotent character sheaves \cite{Geordie}. The class of examples we list below are no different and one can do the same for potentially interesting algebras and Kazhdan-Lusztig polynomials. But here we will focus on the rewriting problem. Individual computational gates will be generators of a Coxeter group and as such we define the following:

\begin{definition}[Coxeter Compiler]
An algorithm that takes a word in a Coxeter group representing a computation and outputs the reduced normal form. This shortens the programs from long unoptimized presentations into a circuit that implements the same operation. For example, one may write  a reversible computation as $(x,y) \to (x,y+x)$ followed by $(z,w) \to (z,z+w)$ without realizing that it simplifies.
\end{definition}

\subsection{2-Category of Programs}

Computation is naturally viewed in a 2-categorical language. This section is not necessary for the main results, but we hope it shows how unpacking a higher categorical perspective helps in very concrete rigid problems. It quite literally gives an extra dimension for imposing compositionality.

\begin{remark}
This conflicts with the philosophy of homoiconicity, but that does not hold in general when one does not have exponential objects.
\end{remark}

\begin{definition}[Proceess of Computation \cite{Stay}]
Data types are objects, programs are morphisms and equivalence classes of reductions are 2-morphisms.
\end{definition}

In our cases, the only data types are indexed by natural numbers and correspond to either $n$ bits or $n$ qubits. The programs are words in the relevant Coxeter group and the 2-morphisms are simplifications done by the Coxeter compiler.

Even upon deformation to Hecke algebras, this fits nicely with the perspective of 2 dimensional field theories. A 2-dimensional topological field theory is defined by $Alg_{bim}$, the $(\infty,2)$ category with algebras, bimodules and intertwiners. For our case, the words in a Coxeter group turn into Soergel bimodules, composition becomes tensor product of bimodules and the reductions become isomorphisms thereof. Note here we have only kept the invertible 2-morphisms even though others were allowed a priori.

Remembering that $n$ and $m$ can be combined into $n+m$ allows one to bring in a small third dimension. A program that operates on the $n$ and $m$ bits separately and then combines them can be thought of in terms of the 2-dimensional topological field theory for $n$, $m$ and then merge the sheets into the one for $n+m$. This gives a discretized third dimension. One may think of each sheet being different agents who are forking and merging.

\begin{definition}[Phyllo field theory]
This name is nonstandard but the picture to have in mind is old \cite{KashaevKolya}. The type $A$ example illustrates this most clearly. For each particular $A_n$ we can draw pictures on a single layer with Soergel bimodules and their morphisms. But the inclusion $S_n \times S_m \to S_{n+m}$ allow one to combine two layers into one. This can be repeated as much as desired in order to make a flakey structure.
\end{definition}

\begin{remark}[Error correction functor]
Let $C$ be an error correcting code that replaces $k$ (qu)bits by $n$ (qu)bits. Let it's encoding and decoding circuits be given in the same gate set $G$. Consider the category $\mathcal{C}_{k,G}$ whose objects are indexed by multiples of $k$ for possible number of (qu)bits and whose morphisms are potentially unreduced words in the gate set $G$. Then applying the error correcting code gives a functor to $\mathcal{C}_{n,G}$. The objects $k*l \to n*l$ and each gate in the word gets sandwiched between a decoding and an encoding. In a functorial field theory picture this means that the error correcting code gives a manifold of one dimension higher with two boundaries for the $\mathcal{C}_{k,G}$ and $\mathcal{C}_{n,G}$ parts respectively.
\end{remark}

\section{Classical Gates}

\subsection{Universal Gates}

Let us set up the notation for the types of universal gates used in reversible computing.

\begin{definition}[Toffoli Gate]
A 3 bit gate which if the first 2 bits are set to true, then the third bit is flipped. Otherwise nothing gets changed. Index these as $Tof^n_{ijk}$ where $ijk$ index which $3$ bits lines of the total $n$ are used. There is symmetry under swapping $ij$.
\end{definition}

\begin{definition}[Fredkin Gate]
A 3 bit gate that implements a controlled swap. If the first bit is set to true then the second and third are swapped. Otherwise nothing gets changed. Index these as $Fred^n_{ijk}$. There is symmetry under swapping $jk$.
\end{definition}

\begin{definition}[$C^{k,n}$]
A $k$ bit gate which maps $0^k$ to $1^k$ and vice versa. All other $k$ bit inputs are sent to themselves. The generators are $C^{k,n}_{i_1 \cdots i_k}$ where $i_1 \cdots i_k$ have full permutation symmetry of $S_k$.
\end{definition}

\begin{definition}[$T^{k,n}$]
A $k$ bit gate which maps $x$ to $\bar{x}$ if the Hamming weight $\abs{x}$ is odd. All other inputs are sent to themselves. The generators are $T^{k,n}_{i_1 \cdots i_k}$ where $i_1 \cdots i_k$ have full permutation symmetry of $S_k$.
\end{definition}

\begin{definition}[$F^{k,n}$]
A $k$ bit gate which maps $x$ to $\bar{x}$ if the Hamming weight $\abs{x}$ is even. All other inputs are sent to themselves. The generators are $F^{k,n}_{i_1 \cdots i_k}$ where $i_1 \cdots i_k$ have full permutation symmetry of $S_k$.
\end{definition}

\begin{definition}[$CNOTNOT^n$]
A 3 bit gate that implements a controlled not. If the first bit is set to true, then a not is applied to both the second and third. Otherwise nothing gets changed. Index these as $CNOTNOT^n_{ijk}$. There is symmetry under swapping $jk$.
\end{definition}

\subsection{Universality Result}

In this portion, the relevant definitions and main result of \cite{Aaronson} are reviewed.

\begin{definition}[Ancilla]
For universality results on $n$ bits, you are allowed to put in as $O(1)$ ancilla bits as you want provide you return their values to their initializations at the end. The bound for this constant depends on the gate set.
\end{definition}

\begin{definition}[Computational Generated Group]
Given any set of reversible gates $\setof{G_i}$, let $G^n$ be the smallest subgroup of $S_{2^n}$ containing

\begin{itemize}
\setlength\itemsep{-1em}
\item All the $G_i$\\
\item All permutations of $n$\\
\item The inclusions of $G^{n-1}$ upon adjoining a dummy bit line.\\
\end{itemize}

and satisfying the ancilla rule that whenever $G^{n+k}$ contains a transformation $F$ that is the identity on the last $k$ components and does not depend on them as $F(x_1 \cdots x_n a_1 \cdots a_k) = H(x_1 \cdots x_n) a_1 \cdots a_k$, then $H$ is an element of $G^n$.

\end{definition}

\begin{theorem}

All generated groups are one of the following:

\begin{itemize}
\setlength\itemsep{-1em}
\item Only bit swaps. - computationally generated by the empty set\\
\item All transformations - computationally generated by $Tof^{n}_{ijk}$\\
\item Hamming weight conserving subgroup - computationally  generated by Fredkin $Fred^n_{ijk}$\\
\item The class of all modulo $k$ preserving transformations-computationally generated by $C^{k,n}_{i_1 \cdots i_k}$\\
\item All affine transformations - computationally generated by $CNOT^n_{ij}$\\
\item All parity preserving affine transformations - computationally generated by $CNOTNOT^n_{ijk}$\\
\item All mod-4 preseving affine transformations - computationally generated by $F^{4,n}_{i_1 \cdots i_4}$\\
\item All orthogonal linear transformations - computationally generated by $T^{4,n}_{i_1 \cdots i_4}$\\
\item All mod-4 preserving orthogonal linear transformations - computationally generated by $T^{6,n}_{i_1 \cdots i_6}$\\
\item NOTNOT Augmented class 1 - computationally generated by the $??$ and $NOTNOT_{ij}$.\\
\item NOTNOT Augmented class 3 - computationally generated by the $Fred^{n}_{ijk}$ and $NOTNOT_{ij}$.\\
\item NOTNOT Augmented classes 7/8 - computationally generated by either $F^{4,n}_{i_1 \cdots i_4}$ and $NOTNOT_{ij}$ or $T^{4,n}_{i_1 \cdots i_4}$ and $NOTNOT_{ij}$. The two choices are equivalent groups.\\
\item NOTNOT Augmented classes 9 - computationally generated by either $T^{6,n}_{i_1 \cdots i_6}$ and $NOTNOT_{ij}$\\
\item NOT Augmented class 1 - computationally generated by the $??$ and $NOT_{i}$.\\
\item NOT Augmented class 3 - computationally generated by the $Fred^{n}_{ijk}$ and $NOT_{i}$.\\
\item NOT Augmented class 6 - computationally generated by the $CNOTNOT^{n}_{ijk}$ and $NOT_{i}$.\\
\item NOT Augmented classes 7/8 - computationally generated by either $F^{4,n}_{i_1 \cdots i_4}$ and $NOT_{i}$ or $T^{4,n}_{i_1 \cdots i_4}$ and $NOT_{i}$. The two choices are equivalent groups.\\
\item NOT Augmented classes 9 - computationally generated by $T^{6,n}_{i_1 \cdots i_6}$ and $NOT_{i}$\\
\end{itemize}
\end{theorem}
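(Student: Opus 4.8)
The plan is to establish the two directions that a classification of this shape requires: a \emph{soundness} direction, that each listed generating set produces no more than the class ascribed to it, and a \emph{completeness} direction, that these generating sets produce everything in their claimed class and that no further closed classes exist. The organizing device is a short list of \emph{conserved quantities} attached to a reversible transformation $F$ on $n$ bits: whether $F$ is a mere bit-permutation; whether $F$ preserves Hamming weight exactly; whether it preserves Hamming weight modulo $k$; whether $F$ is affine over $\mathbb{F}_2$, i.e.\ $F(x) = Ax + b$; whether $F$ is orthogonal, i.e.\ preserves the mod-$2$ inner product; whether $F$ preserves the parity $\sum_i x_i \bmod 2$; and whether $F$ preserves $\sum_i x_i \bmod 4$. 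Each named class in the statement is exactly the set of transformations respecting some subset of these invariants.

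First I would verify that every invariant is stable under the four closure operations defining a computationally generated group: composition, adjoining a dummy bit line, conjugation by bit-permutations, and the ancilla rule. Composition, the dummy-bit inclusion, and permutation-conjugation are immediate. The ancilla rule is the one needing care: one must check that when an invariant-preserving $F$ on $n+k$ bits is the identity on and independent of the last $k$ lines, its restriction $H$ to the first $n$ lines still preserves the invariant. Granting this, each generator in the list respects a specific set of invariants, so the group it generates is \emph{contained} in the corresponding class; this yields all the upper bounds at once.

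The substantive content is the matching \emph{generation lemmas}: that each generator, together with the closure operations, realizes \emph{every} transformation preserving its invariants. I would take the universality of Toffoli (Toffoli with ancillas and permutations generates all of $S_{2^n}$) as the base case, then prove each remaining lemma by induction on $n$, the recurring move being to spend one ancilla bit to decompose a target into a product of controlled pieces acting on fewer active lines, each realizable by the generator. Thus controlled swaps build every Hamming-weight-preserving permutation (Fredkin), every transformation in the linear/affine class decomposes into elementary transvections obtained from $CNOT$, and the orthogonal and mod-$4$ classes ($T^{4},T^{6},F^{4}$) follow by exhibiting enough generators of the relevant matrix subgroup of $\GL_n(\mathbb{F}_2)$.

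The main obstacle, and the part I expect to be hardest, is \emph{completeness}: showing the lattice is exhaustive, so that an arbitrary closed class $G$ coincides with one of the named classes. The plan is a dichotomy/bootstrapping argument: read off the maximal set of invariants respected by $G$, and show that if $G$ contains even a single gate violating a given invariant, then after closure $G$ already contains the canonical generator witnessing that violation, collapsing the infinite space of gates onto the finite list. The delicate points are (i) the affine/non-affine split, where one must show any non-affine gate manufactures a Toffoli or a Fredkin via ancillas, and (ii) the proliferation of the $NOT$- and $NOTNOT$-augmented classes, where adjoining a single $NOT_i$ or $NOTNOT_{ij}$ destroys a conservation law (Hamming weight, or parity, or mod-$4$) while preserving affineness, wedging the augmented class strictly between two unaugmented ones. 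Determining precisely which augmentations give genuinely new groups, and proving the two ``$7/8$'' generating choices yield the same group, is where the bulk of the case analysis lives; this is in essence the argument of \cite{Aaronson}, whose structure I would follow.
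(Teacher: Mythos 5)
You could not have known this, but the paper does not prove this theorem at all: it is stated explicitly as a review of the main result of \cite{Aaronson} (the Aaronson--Grier--Schaeffer classification of reversible bit operations), so the ``paper's own proof'' is a citation, and your proposal must be judged as a reconstruction of the cited argument. Judged that way, you have identified its architecture correctly: attach conserved quantities to a transformation, check each is stable under the closure operations (composition, dummy lines, permutation conjugation, and the ancilla rule, which is indeed the delicate one), prove generation lemmas for the lower bounds, and then prove the resulting lattice is exhaustive. That is the shape of the argument in the reference, as you yourself acknowledge.

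Two substantive criticisms remain. First, the completeness direction --- which you correctly flag as the hard part --- is not an argument in your write-up but a restatement of the theorem: ``show that if $G$ contains a gate violating an invariant, then after closure $G$ contains the canonical generator witnessing that violation'' is precisely what must be proved, and carrying it out for every combination of invariants (non-affine gates generating Toffoli or Fredkin modulo NOTs, non-orthogonal linear gates generating $CNOT$, the infinite mod-$k$ family, the low-$n$ degeneracies) is essentially the entire content of \cite{Aaronson}; nothing in the sketch would let a reader reconstruct it. Second, your organizing claim that ``each named class in the statement is exactly the set of transformations respecting some subset of these invariants'' is false as stated for the NOT- and NOTNOT-augmented classes: $NOT_i$ preserves none of your listed weight-type invariants, yet $\langle Fred, NOT \rangle$ is a proper class, so it cannot be cut out by any subset of them. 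These augmented classes are characterized by \emph{twisted} invariants, i.e.\ preservation up to a global constant: for example, a circuit of Fredkins and $m$ NOT gates satisfies $\abs{F(x)} \equiv \abs{x} + m \pmod 2$ for every input $x$, hence $\abs{F(x)}-\abs{F(y)} \equiv \abs{x}-\abs{y} \pmod 2$ for all $x,y$, an invariant that Toffoli visibly violates (take $x=110$, $y=000$). Getting this twisted dictionary right is exactly where the proliferation of augmented classes comes from, and it also changes what you must verify for the ancilla rule. So: correct skeleton, correctly attributed to \cite{Aaronson}, but with the decisive case analysis missing and the invariant dictionary in need of repair.
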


\begin{remark}
The class of all reversible computations computationally generated by $Tof_{ijk}$ can be understood categorically\cite{CockettComfort}.
\end{remark}

\subsection{Coxeter Structure}

\begin{cor}
Each class comes with the structure of quotients from a countable family of Coxeter groups.
\end{cor}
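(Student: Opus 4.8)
The plan is to exploit the one structural fact shared by every gate in the theorem: each is an \emph{involution}. Concretely, each gate acts on $\setof{0,1}^n$ as a product of disjoint transpositions, hence has order $2$ in $S_{2^n}$. For the Toffoli, Fredkin and $CNOTNOT$ gates this is immediate, and for $C^{k,n}$ the map $0^k \leftrightarrow 1^k$ is visibly an involution. For $T^{k,n}$ and $F^{k,n}$ one checks that $x \mapsto \bar x$ is self-inverse and that bitwise complementation preserves the relevant parity condition precisely because the $k$ used ($k=4,6$) is even, so these too are involutions. Finally, the ``all permutations of $n$'' part is generated by the adjacent transpositions $(i,i+1)$, which are involutions carrying exactly the Coxeter structure of $S_n$ recalled earlier. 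Each concrete group in the list is therefore generated by involutions: a symmetric group or a Hamming-weight-preserving product of symmetric groups by transpositions, an affine group by translations and CNOT transvections, an orthogonal group by reflections. Thus, for each class and each $n$, the group $G^n$ comes with a distinguished finite generating set $S_n = \setof{s_1,\dots,s_N}$ consisting entirely of involutions, built from the defining gates together with these transpositions.

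First I would turn this generating set into a Coxeter presentation. Since $G^n \subseteq S_{2^n}$ is finite, for every pair $i \neq j$ the product $s_i s_j$ has finite order; set $m_{ij} := \mathrm{ord}(s_i s_j) \in \NN_{\geq 2}$. Let $W_n = (W,R)$ be the Coxeter group on generators $r_1,\dots,r_N$ with relations $r_i^2 = 1$ and $(r_i r_j)^{m_{ij}} = 1$, in the sense of the opening definition. By construction the involutions $s_i \in G^n$ satisfy exactly these defining relations, so $r_i \mapsto s_i$ extends to a homomorphism $\phi_n : W_n \to G^n$, which is surjective because $S_n$ generates $G^n$. Hence
\[
G^n \;\cong\; W_n / \ker \phi_n
\]
exhibits each $G^n$ as a quotient of the Coxeter group $W_n$, and letting $n$ range over $\NN$ produces, for each class, the countable family $\setof{W_n}_{n \in \NN}$ demanded by the statement. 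If one declines to compute the $m_{ij}$ at all, the cruder choice $m_{ij} = \infty$ already suffices, since the free product of copies of $\mathbb{Z}/2$ is itself a Coxeter group surjecting onto any group generated by $N$ involutions; using the true orders only selects the smallest, most informative member of the family.

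To upgrade ``a quotient of some Coxeter group for each $n$'' into a genuine \emph{structure} on the class, I would verify that the choice of $S_n$ can be made uniform in $n$, so that the inclusion $G^{n-1}\hookrightarrow G^n$ coming from a dummy bit line is covered by an inclusion $W_{n-1}\hookrightarrow W_n$ of standard parabolic subgroups. This is where the earlier Remark is relevant: a naive indexing concentrates all nontrivial $m_{ij}$ on the first few lines and yields an inhomogeneous diagram, whereas a symmetric relabelling gives a coherent family whose Coxeter graphs stabilise. The one genuinely computational step, and the main obstacle to making the corollary quantitative rather than merely existential, is determining the Coxeter matrix $(m_{ij})$ explicitly for each gate set, i.e.\ computing the orders of the pairwise products $s_i s_j$ as permutations of $2^n$ points. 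As the Remark indicates, once these orders are tabulated the Coxeter graph ``can easily be built from this data,'' and the efficient word problem for Coxeter groups then supplies the Coxeter compiler promised for each class.
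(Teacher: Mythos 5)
Your core mechanism --- the generators are involutions, their pairwise products have finite order inside a finite symmetric group, so von Dyck's theorem gives a surjection from the Coxeter group with matrix $m_{ij}=\mathrm{ord}(s_is_j)$ --- is the same mechanism the paper uses. But there is a genuine gap at the step ``$\phi_n$ is surjective because $S_n$ generates $G^n$.'' The classes in the theorem are \emph{computationally} generated groups, and that definition includes the ancilla rule: $H$ belongs to $G^n$ whenever some extension $F(x_1\cdots x_n a_1\cdots a_k)=H(x_1\cdots x_n)a_1\cdots a_k$ lies in $G^{n+k}$. Consequently $G^n$ is in general strictly larger than the subgroup of $S_{2^n}$ generated by the gates acting on the $n$ lines alone, and your $\phi_n$ need not be onto. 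Concretely, take the Toffoli class with $n\geq 4$: a Toffoli on $3$ of the $n$ lines is a product of $2^{n-3}$ transpositions of bit strings, and a line swap is a product of $2^{n-2}$ transpositions, so every generator in your set is an \emph{even} permutation and $\langle S_n\rangle \subseteq A_{2^n}$; yet by the classification $G^n$ is \emph{all} of $S_{2^n}$, the odd permutations being reachable only by circuits that borrow an ancilla bit. So for this class no assignment of orders $m_{ij}$ can repair the argument --- the generating set itself is too small.

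This is exactly what the first paragraph of the paper's proof supplies, and it is the one idea missing from your write-up: realize every element of $G^n$ as a circuit on $n+k$ bit lines using the prescribed gates and the permutations of $n+k$; since there are only finitely many reversible transformations of $n$ bits, a single minimal $k$ works uniformly for all of $G^n$; then run your von Dyck argument on the involutions acting on $n+k$ lines, whose pairwise products live in the finite group $S_{2^{n+k}}$. (One may quibble that this exhibits $G^n$ as a subgroup of a quotient of a Coxeter group rather than a quotient on the nose, but that is the sense in which the corollary is used downstream: circuits implementing elements of $G^n$ are words in that Coxeter group.) Your third paragraph, on choosing the generating sets coherently in $n$ so that the $W_n$ form a nested family of parabolics, is a reasonable refinement, but it neither addresses nor substitutes for the ancilla issue; likewise your fallback via the free product of copies of $\mathbb{Z}/2$ surjects only onto $\langle S_n\rangle$, not onto $G^n$.
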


\begin{proof}
Everything in $G^n$ can be regarded as a circuit with some number $k$ of ancilla using only the prescribed gates $G_i$ and the permutations of $n+k$. Because there are a finite number of reversible transformation of $n$ bits, for each $n$ there is a minimal number $k$ so WLOG take $k$ to be this minimum necessary for all of $G^n$. The $G_i$ is taken to be acting on only the first few bits because conjugation can provide the others.\\
All of these generators are involutions. The product of any two is an element of a finite group $S_{2^{n+k}}$. Computing these orders gives the requisite Coxeter matrix. There may be other relations, but the Coxeter group has a map to that quotient by the additional relations.\\
In fact often the Coxeter group will be infinite as witnessed by affine Coxeter subgraphs. That guarantees a drastic quotient to get down to $\leq (2^{n+k})!$.
\end{proof}

For each of these classes, we should provide the sequence of Coxeter groups indexed by $n$ which we assume is $\geq 4$ to avoid the degenerate cases.

The first case with only bit swaps is simply $S_n$ with the usual Coxeter generators $s_{i,i+1}$.\\

We can do all these cases except the countable family $C^{k,n}$ at once.

\begin{theorem}
The Coxeter structure from the generators on 

\begin{itemize}
\setlength\itemsep{-1em}
\item $r_1=s_{12}$\\
\item $r_2=s_{23}$\\
\item $r_{i \leq n}=s_{i,i+1}$\\
\item $r_{n}=NOT_1$\\
\item $r_{n+1}=CNOT_{12}$\\
\item $r_{n+2}=Tof_{123}$\\
\item $r_{n+3}=Fred_{123}$\\
\item $r_{n+4}=CNOTNOT_{123}$\\
\item $r_{n+5}=F^{4,n}_{1234}$\\
\item $r_{n+6}=T^{4,n}_{1234}$\\
\item $r_{n+7}=T^{6,n}_{123456}$\\
\item $r_{n+8}=NOTNOT_{12}=NOT_1 NOT_2$\\
\end{itemize}

with $n=7$ is

\begin{eqnarray*}
M &=& \left(
\begin{array}{ccccccccccccccc}
 1 & 3 & 2 & 2 & 2 & 2 & 4 & 3 & 2 & 6 & 3 & 2 & 2 & 2 & 2 \\
 3 & 1 & 3 & 2 & 2 & 2 & 2 & 4 & 6 & 2 & 2 & 2 & 2 & 2 & 4 \\
 2 & 3 & 1 & 3 & 2 & 2 & 2 & 2 & 4 & 6 & 4 & 2 & 2 & 2 & 2 \\
 2 & 2 & 3 & 1 & 3 & 2 & 2 & 2 & 2 & 2 & 2 & 3 & 3 & 2 & 2 \\
 2 & 2 & 2 & 3 & 1 & 3 & 2 & 2 & 2 & 2 & 2 & 2 & 2 & 2 & 2 \\
 2 & 2 & 2 & 2 & 3 & 1 & 2 & 2 & 2 & 2 & 2 & 2 & 2 & 3 & 2 \\
 4 & 2 & 2 & 2 & 2 & 2 & 1 & 4 & 4 & 4 & 4 & 4 & 4 & 4 & 2 \\
 3 & 4 & 2 & 2 & 2 & 2 & 4 & 1 & 4 & 4 & 2 & 3 & 3 & 3 & 4 \\
 2 & 6 & 4 & 2 & 2 & 2 & 4 & 4 & 1 & 3 & 4 & 6 & 6 & 6 & 4 \\
 6 & 2 & 6 & 2 & 2 & 2 & 4 & 4 & 3 & 1 & 2 & 4 & 4 & 4 & 4 \\
 3 & 2 & 4 & 2 & 2 & 2 & 4 & 2 & 4 & 2 & 1 & 4 & 4 & 4 & 4 \\
 2 & 2 & 2 & 3 & 2 & 2 & 4 & 3 & 6 & 4 & 4 & 1 & 2 & 2 & 2 \\
 2 & 2 & 2 & 3 & 2 & 2 & 4 & 3 & 6 & 4 & 4 & 2 & 1 & 2 & 2 \\
 2 & 2 & 2 & 2 & 2 & 3 & 4 & 3 & 6 & 4 & 4 & 2 & 2 & 1 & 2 \\
 2 & 4 & 2 & 2 & 2 & 2 & 2 & 4 & 4 & 4 & 4 & 2 & 2 & 2 & 1 \\
\end{array}
\right)
\end{eqnarray*}

For more than $7$ bits, the new generators are $r_i$ for $i=7$ through $n-1$. These commute with all the $r_n \cdots r_{n+8}$ because they act on different bits. Thus $m_{ij}=2$ for these. The only nontrivial $m_{ij}$ are with those generators that correspond to $r_{i-1}$ and $r_{i+1}$ (if $i+1<n$) in which case $m_{ij}=3$. This means the Coxeter matrix can be easily built from this one. The Coxeter graph is the one for $n=7$ with a linear tail attached.
\par When wanting the subgroups generated by only swaps and $Tof$ or swaps and $Fred$ take the corresponding submatrix to keep only the desired generators. These are read off from the theorem before.

\end{theorem}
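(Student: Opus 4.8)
The plan is to treat this as an explicit finite computation, since by the construction in the corollary above each generator $r_i$ is a concrete involution in a permutation group of bit-strings. With $n=7$ every listed gate already acts within the seven lines $1,\dots,7$ (the widest is $T^{6,n}_{123456}$, on six lines), so no ancillas are needed for these particular generators and all fifteen of them live in $S_{2^7}=S_{128}$. Each off-diagonal entry $m_{ij}$ is by definition the order of $r_ir_j$ there, and the diagonal is $1$ because every $r_i$ is an involution; so the theorem reduces to computing the $105$ orders of products and checking they agree with $M$.

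First I would record the support of each generator, i.e.\ the set of bit-lines on which it acts nontrivially, and prove the locality reduction: if $r_i$ is supported on $A$ and $r_j$ on $B$, then writing $\{0,1\}^7=\{0,1\}^{A\cup B}\times\{0,1\}^{(A\cup B)^c}$ the product $r_ir_j$ is $2^{7-|A\cup B|}$ identical disjoint copies of its restriction to $\{0,1\}^{A\cup B}$, indexed by the frozen coordinates outside $A\cup B$. Hence its order equals the order of that restriction; in particular \emph{disjoint supports force $m_{ij}=2$}. This already pins down every entry of $M$ equal to $2$ and reduces every remaining computation to a permutation of at most $2^{|A\cup B|}\le 2^{7}$ points.

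Next I would fill in the blocks. The swap--swap block is the familiar one: adjacent coordinate transpositions $s_{i,i+1},s_{i+1,i+2}$ compose to a $3$-cycle of coordinates (order $3$) and non-adjacent ones commute, so rows and columns $1$--$6$ form the $A_6$ Coxeter matrix. The swap--gate and gate--gate entries I would compute pair-by-pair on the small cube $\{0,1\}^{A\cup B}$; for instance $s_{12}\,NOT_1$ cycles $00\to01\to11\to10\to00$, giving order $4$ and matching $m_{1,7}=4$. The entries equal to $3$ or $4$ arise this way from short coordinate-permutation or parity interactions, and the only genuinely delicate cases are the entries equal to $6$, all of which involve $Tof_{123}$ or $Fred_{123}$; there the product is conjugate to a disjoint product of a $2$-cycle and a $3$-cycle. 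For example $s_{23}\,Tof_{123}$ equals $(001\,010)(101\,110\,111)$, of order $6$, confirming $m_{2,9}=6$. These order-$4$ and order-$6$ cases are where sign errors are most likely, so I would verify each explicitly (and cross-check in a computer algebra system).

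Finally, for $n>7$ the only new generators are the extra adjacent swaps $s_{7,8},\dots,s_{n-1,n}$. Each is supported on $\{i,i+1\}$ with $i\ge7$, disjoint from every gate support $\subseteq\{1,\dots,6\}$ and from every non-adjacent swap, so $m_{ij}=2$ in all those cases by the locality reduction, while neighbouring swaps again give $m_{ij}=3$. This is exactly a type-$A$ tail glued to the $n=7$ graph, as the statement asserts. The main obstacle is not conceptual but bookkeeping: there is no single slick identity covering all $105$ entries, so the bulk of the work is the careful case analysis, and the handful of order-$4$ and order-$6$ computations carry essentially all the risk of error.
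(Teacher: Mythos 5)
Your proposal is correct and takes essentially the same approach as the paper, whose entire proof is the one-line instruction to write each generator as a permutation of the $2^7$ bit-strings and compute the orders of all products $r_i r_j$. Your additions --- the locality reduction for disjoint supports, the $A_6$ block, and the explicit spot checks such as $s_{23}\,Tof_{123}$ having cycle type $(2,3)$ and hence order $6$ --- are just a fleshed-out version of that same brute-force computation, including the same disjoint-support argument the theorem itself uses for the $n>7$ tail.
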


\begin{proof}
Write the permutations on $2^7$ that they induce and compute the orders of $r_i r_j$ for all pairs.

\end{proof}

\begin{proposition}
Assume $k \geq 3$. The remaining case are the groups $C^{k,n}$ generated by the $r_i=s_{i,i+1}$ and $r_n=C^{k,n}_{1\cdots k}$. For these the Coxeter matrix is the usual one for $S_n$ but with an extra row and column given as follows:

\begin{eqnarray*}
m_{n,n} &=& 1\\
m_{k,n} = m_{n,k} &=& 6\\
m_{i \neq k \; \text{or} \; n,n} = m_{n,i \neq k \; \text{or} \; n} &=& 2\\
\end{eqnarray*}

\end{proposition}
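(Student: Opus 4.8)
By the corollary above, each Coxeter-matrix entry $m_{ij}$ is simply the order of the product $r_i r_j$ inside the ambient symmetric group, so the plan is to compute these orders directly. The entries among $r_1,\dots,r_{n-1}=s_{12},\dots,s_{n-1,n}$ are the standard braid and commutation relations of the type $A_{n-1}$ presentation of $S_n$, and need no new argument; this is the ``usual matrix for $S_n$.'' Hence all the work is in the last row and column, namely $m_{n,n}$ and the orders of $s_{i,i+1}\,C^{k,n}_{1\cdots k}$ for $1\le i\le n-1$.

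For the diagonal I would observe that $C^{k,n}_{1\cdots k}$ interchanges $0^k$ and $1^k$ on the first $k$ lines and fixes everything else, hence is an involution and $m_{n,n}=1$. For $i\ne k$ I would show that $s_{i,i+1}$ and $C^{k,n}_{1\cdots k}$ commute, splitting into two cases. If $i<k$, then $s_{i,i+1}$ merely permutes two of the first $k$ lines; since the gate only tests whether those $k$ bits are all $0$ or all $1$ and then complements them all, it is invariant under permuting them, so the two commute. If $i>k$, then $s_{i,i+1}$ acts on lines disjoint from the support $\{1,\dots,k\}$ of the gate, so they commute trivially. Either way $m_{i,n}=2$, as claimed.

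The only nontrivial entry is $m_{k,n}$, the order of $g=s_{k,k+1}\,C^{k,n}_{1\cdots k}$, where the swap straddles the boundary of the gate block. Since both factors fix lines $k+2,\dots,n$, I would work on the first $k+1$ bits, writing a state as $(w\mid b)$ with $w\in\{0,1\}^k$ the block and $b$ the $(k+1)$-st bit; restoring the spectator lines only replicates cycles and leaves their lengths unchanged. Here $C^{k,n}$ swaps $(0^k\mid b)\leftrightarrow(1^k\mid b)$ and fixes all $w\notin\{0^k,1^k\}$, while $s_{k,k+1}$ exchanges the last entry of $w$ with $b$. Tracing the orbits through the states where the gate is active yields exactly the two $3$-cycles $(0^k\mid 0)\to(1^{k-1}0\mid 1)\to(1^k\mid 0)\to(0^k\mid 0)$ and $(0^k\mid 1)\to(1^k\mid 1)\to(0^{k-1}1\mid 0)\to(0^k\mid 1)$. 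On every state with $w$ non-constant the gate is the identity, so there $g$ acts as the bare transposition $s_{k,k+1}$: such a state is fixed if its last two bits agree and otherwise lies in a $2$-cycle with its swap, the only exceptions being the two states already consumed by the $3$-cycles above. This classification shows every cycle of $g$ has length $1$, $2$ or $3$, so the order divides $\operatorname{lcm}(2,3)=6$; since the $3$-cycles are always present and (for $k\ge 3$) a $2$-cycle exists — take $w$ with both a $0$ and a $1$ among its first $k-1$ entries, which is possible as $k-1\ge 2$, and $b$ opposite to the last entry of $w$ — the order is exactly $6$.

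The crux, and the step most in need of care, is the cycle classification: showing that the generic regime in which $g$ behaves like a transposition can only be exited through the two explicit $3$-cycles, so that no cycle exceeds length $3$ and the order is pinned to $6$ rather than a proper divisor. The hypothesis $k\ge 3$ enters precisely in producing a $2$-cycle: for $k=2$ a $2$-cycle would require the three bits $w_1$, $w_2$ and $b$ to be pairwise distinct, which is impossible, so the $2$-cycles disappear, the order drops to $3$, and the value $6$ fails — exactly the degenerate case the proposition rules out.
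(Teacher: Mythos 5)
Your proof is correct and follows essentially the same route as the paper's: reduce everything to the single nontrivial entry $m_{k,n}$, restrict to the first $k+1$ bits, and read off the cycle structure of $s_{k,k+1}\,C^{k,n}_{1\cdots k}$, which consists of two $3$-cycles, fixed points, and (precisely when $k \ge 3$) $2$-cycles, giving order $6$. If anything, your classification of the non-constant-block states --- including the check that exactly two of them are absorbed into the $3$-cycles and the explicit construction of a $2$-cycle for $k \ge 3$ --- is more careful than the paper's argument, which asserts the same orbit structure by direct enumeration of cases.
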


\begin{proof}
For $C^{k,n}$, the generators are the $s_{i,i+1}$ and $C^{k,n}_{1 \cdots k}$. The only $s_{i,i+1}$ that may fail to commute with this is $s_{k,k+1}$. That is the only extra edge in the Coxeter graph whose order needs to be computed using $k+1$ bits.

There are the following relevant cases for potential inputs.

\begin{itemize}
\setlength\itemsep{-1em}
\item $k-1$ all zeroes, k'th is zero and $k+1$'st is zero will go to possibility $6$ below under $r_k r_n$\\
\item $k-1$ all zeroes, k'th is zero and $k+1$'st is one will go to $8$ under $r_k r_n$\\
\item $k-1$ all zeroes, k'th is one and $k+1$'st is zero will go to $2$ under $r_k r_n$\\
\item $k-1$ all zeroes, k'th is one and $k+1$'st is one will go to itself under $r_k r_n$\\
\item $k-1$ all ones, k'th is zero and $k+1$'st is zero will go to itself under $r_k r_n$\\
\item $k-1$ all ones, k'th is zero and $k+1$'st is one will go to $7$ under $r_k r_n$\\
\item $k-1$ all ones, k'th is one and $k+1$'st is zero will go to $1$ under $r_k r_n$\\
\item $k-1$ all ones, k'th is one and $k+1$'st is one will go to $3$ under $r_k r_n$\\
\item If $k-1$ not uniform $k$'th and $k+1$'st are equal, then $r_n$ will have no influence and it will be fixed.\\
\item If $k-1$ not uniform $k$'th is zero and $k+1$'st is one, then $r_n$ will have no influence and it will go to the next possibility.\\
\item If $k-1$ not uniform $k$'th is one and $k+1$'st is zero, then $r_n$ will have no influence and it will go to the previous\\
\end{itemize}

So we see orbits or period $3$ and period $2$, therefore the total order is $6$. If $k=2$ then the possibility of the first $k-1$ not being uniform is impossible leaving order $3$. If $k=1$, then the order is $4$ by a direct check.

\end{proof}

\begin{cor}
All the Coxeter groups showing up this way in reversible computing are crystallographic.
\end{cor}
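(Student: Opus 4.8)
The plan is to reduce the statement to the classical \emph{crystallographic restriction} for Coxeter groups: a Coxeter group is crystallographic precisely when every finite bond strength $m_{ij}$ lies in $\{2,3,4,6\}$, the value $\infty$ being always admissible. With this reduction in hand the whole corollary splits into a finite inspection of the Coxeter matrices produced in the preceding theorem and proposition, followed by one uniform construction that turns the bound $m_{ij}\in\{2,3,4,6,\infty\}$ into an explicit invariant lattice.

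First I would carry out the inspection. Every off-diagonal entry of the matrix $M$ for $n=7$ lies in $\{2,3,4,6\}$, and the same holds for each submatrix selected in the theorem (swaps-and-$Tof$, swaps-and-$Fred$, and the rest), so there the check is immediate. For $n>7$ the only new bonds are the $m_{ij}=3$ edges of the linear tail together with commuting $m_{ij}=2$ bonds, both admissible. For the family $C^{k,n}$ the proposition supplies the $S_n$ submatrix, whose entries are $2$ and $3$, together with a single nontrivial new bond of strength $6$ (or $4$ when $k=1$, or $3$ when $k=2$); all of these lie in $\{2,3,4,6\}$. Hence in every case the Coxeter matrix avoids the forbidden orders $5,7,8,\dots$, which is the whole point, since those are exactly the orders with no integral realization.

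Second I would exhibit the lattice. Given a Coxeter matrix with all $m_{ij}\in\{2,3,4,6,\infty\}$, define a generalized Cartan matrix $A=(a_{ij})$ by $a_{ii}=2$ and, for $i\neq j$, by choosing negative integers with $a_{ij}a_{ji}=4\cos^2(\pi/m_{ij})$; this product equals $0,1,2,3$ for $m_{ij}=2,3,4,6$ respectively, so one may take $(a_{ij},a_{ji})$ to be $(0,0)$, $(-1,-1)$, $(-1,-2)$, $(-1,-3)$ in turn (and any pair with product $\geq 4$ when $m_{ij}=\infty$). Each edge is handled independently, so cycles in the Coxeter graph impose no compatibility condition. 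Letting the generators act on the root lattice $\bigoplus_i \mathbb{Z}\alpha_i$ by $r_i(\alpha_j)=\alpha_j-a_{ij}\alpha_i$, integrality of the $a_{ij}$ shows that each $r_i$, and therefore all of $W$, preserves this lattice, which is exactly the crystallographic condition.

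The main obstacle is the second step, and specifically the need to know that the reflection representation built from $A$ realizes the Coxeter group $W$ itself with the \emph{correct} orders, that is, that the associated Weyl group reproduces precisely the prescribed $m_{ij}$ rather than some proper quotient. I would dispose of this by invoking the standard Kac--Moody fact that the Weyl group of a symmetrizable generalized Cartan matrix is a Coxeter group whose bonds are recovered from the products $a_{ij}a_{ji}$ through the same dictionary used above, so that the construction faithfully realizes each $W$ over $\mathbb{Z}$. Because every $m_{ij}$ cleared the inspection of the first step, this applies uniformly to all the groups of the theorem and the proposition, completing the argument.
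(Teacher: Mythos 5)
Your overall route matches the paper's, but you go further: the paper's entire proof is your inspection step --- it observes that every entry of the Coxeter matrices lies in $\{1,2,3,4,6,\infty\}$ and then explicitly declines to produce a lattice (``we have not given the lattice, merely shown it exists''). Your second step, realizing each group as the Weyl group of an integral generalized Cartan matrix acting on its root lattice, is exactly the justification the paper omits. Moreover, your reading of ``crystallographic'' as ``preserves a lattice in \emph{some} faithful integral representation'' is essential for the corollary to hold at all: your observation that the asymmetric pairs $(-1,-2)$, $(-1,-3)$ can be chosen edge by edge, with no compatibility condition around cycles, is precisely what fails under the stricter classical convention (an invariant lattice in the standard geometric representation), where one needs an even number of 4-labeled and of 6-labeled edges around every circuit --- a condition the paper's matrix $M$ violates, e.g.\ on the circuit through $r_1, r_2, r_8$ with labels $3,4,3$.

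There is, however, one genuine flaw in your final step: you invoke the Kac--Moody theorem for \emph{symmetrizable} generalized Cartan matrices, but the matrices your construction produces are typically not symmetrizable, and for the graphs in this paper they cannot be made so. Symmetrizability forces the cycle condition $a_{i_1i_2}a_{i_2i_3}\cdots a_{i_ki_1} = a_{i_2i_1}a_{i_3i_2}\cdots a_{i_1i_k}$; on the circuit $r_1,r_2,r_8$ of $M$ the two 3-labeled edges force $a_{12}=a_{21}=a_{18}=a_{81}=-1$ while the 4-labeled edge forces $\{a_{28},a_{82}\}=\{-1,-2\}$, so the two cycle products are $-1$ and $-2$ and no integral symmetrizable realization exists. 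The gap is repairable, because the fact you need does not require symmetrizability: for an \emph{arbitrary} generalized Cartan matrix the Weyl group is a Coxeter group with the $m_{ij}$ recovered from $a_{ij}a_{ji}$ by your dictionary (Kac, \emph{Infinite Dimensional Lie Algebras}, Prop.~3.13), and it acts faithfully on the root lattice $\bigoplus_i \mathbb{Z}\alpha_i$ since any $w \neq 1$ sends some simple root to a negative root. With the citation corrected to the non-symmetrizable version, your argument is complete and supplies the lattice whose existence the paper merely asserts.
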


\begin{proof}
All the entries are in the set $\setof{1,2,3,4,6,\infty}$. We have not given the lattice, merely shown it exists.
\end{proof}

\section{3-SAT}

SAT-solvers have a long history and can be used for many problems. This is thanks to the Cook-Levin theorem. For example, one may encode graph isomorphism for two graphs of $N$ vertices into an instance of boolean satisfiability with $O(N^4)$ clauses to assure that every vertex of $G_1$ goes a vertex of $G_2$, that this is injective and that two vertices in $G_1$ connected by an edge go to a pair in $G_2$ connected by edge.

\begin{definition}[3-SAT]
Let $L$ be a logical formula $L \equiv C_1 \wedge \cdots C_m$ where each $C_m$ is a disjunction of $3$ literals in the variables $x_1 \cdots x_n$. The task is to take $L$ and see if there exists an assignment of the $x_i$ such that $L$ evaluates to TRUE.
\end{definition}

\begin{definition}[3-SAT Circuit]
A circuit $Circ_L$ on $n+1$ lines that takes $x_1 \cdots x_n$ as well as another bit $a$. The output is $x_1 \cdots x_n$ on the first $n$ lines and if $x_1 \cdots x_n$ is a valid assignment for the formula $L$, then $a \to \neg a$. Otherwise the output on the last line is unchanged.
\end{definition}

\begin{cor}
By the theorem above, there exists a number of auxiliary gates $k$ such that a circuit on $n+1+k$ lines can be made with only Toffoli gates and swaps such that when the ancilla are initialized to $0$ the output of the circuit is the output of the above circuit along with the ancilla set back to $0$. This guarantees that $Circ_L$ actually exists for any $L$.
\end{cor}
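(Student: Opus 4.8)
The plan is to avoid building $Circ_L$ gate by gate and instead exhibit it as an element of the ``all transformations'' class, so that the classification theorem above supplies the circuit for free. First I would verify that $Circ_L$ is a bona fide reversible transformation on $n+1$ bits: its prescribed action $(x_1\cdots x_n,a)\mapsto(x_1\cdots x_n,\,a\oplus L(x_1\cdots x_n))$, where $L$ evaluates to $1$ exactly on satisfying assignments, is an involution (a second application restores $a$) and hence a bijection of the $2^{n+1}$ states. Thus $Circ_L$ is a well-defined element of $S_{2^{n+1}}$.

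Next I would place it in the right class and read off the conclusion. Being an arbitrary reversible map, $Circ_L$ belongs to the ``all transformations'' class, which the theorem above identifies as the group computationally generated by the Toffoli gates $Tof^n_{ijk}$ together with the always-available swaps. Unwinding the definition of the computationally generated group $G^{n+1}$, the statement $Circ_L\in G^{n+1}$ means exactly that for some finite number $k$ of ancilla there is a word in $Tof$ and swaps on $n+1+k$ lines whose restriction to the subspace with the last $k$ bits equal to $0$ implements $Circ_L$ and returns those $k$ bits to $0$. This is precisely the asserted circuit, and the finiteness of $k$ is guaranteed because $S_{2^{n+1}}$ is finite; so the corollary reduces to the membership $Circ_L\in G^{n+1}$, which the theorem gives.

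The only delicate point, and the step I expect to need real care, is a parity check. On $m\ge 4$ bits a single Toffoli is a product of $2^{m-3}$ transpositions and a single line swap a product of $2^{m-2}$ transpositions, so both are even and every Toffoli-plus-swap word on $n+1+k$ lines lies in the alternating group $A_{2^{n+1+k}}$; meanwhile $Circ_L$ itself is an odd permutation whenever the number of satisfying assignments is odd. To see there is no obstruction I would exhibit the padded transformation $F=Circ_L\times\id$ acting on the $k$ ancilla as the identity: it consists of $2^k$ disjoint copies of $Circ_L$, so $\sign F=(\sign Circ_L)^{2^k}=+1$ for every $k\ge 1$. Hence $F$ is always even, lives in the alternating group generated by the Toffoli gates and swaps, and restricts to $Circ_L$ on the ancilla-$0$ subspace exactly as required. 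This confirms that the ancilla absorb the parity and that the appeal to the theorem is legitimate.
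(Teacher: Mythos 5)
Your first two paragraphs follow exactly the route the paper itself takes: the paper supplies no proof of this corollary beyond the opening words ``By the theorem above,'' i.e.\ it is meant to follow by specializing the classification theorem (the class of all transformations is computationally generated by $Tof$) to the particular element $Circ_L \in S_{2^{n+1}}$, and that is what you do; the check that $Circ_L$ is a well-defined involution is a harmless addition.

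The genuine gap is in your ``delicate point,'' and it is not a removable slip. Your sign computations are all correct (a Toffoli on $m \ge 4$ lines is a product of $2^{m-3}$ transpositions, a line swap of $2^{m-2}$, and $\sign(Circ_L \times \id) = (\sign Circ_L)^{2^k} = +1$ for $k \ge 1$), but the conclusion you draw from them --- that the even permutation $F = Circ_L \times \id$ therefore ``lives in the alternating group generated by the Toffoli gates and swaps'' --- is false, because Toffoli gates and swaps do not generate $A_{2^{n+1+k}}$. Every positive-control Toffoli and every swap fixes the all-zeros string, so every word in these gates lies in the stabilizer of $0^{n+1+k}$, a proper subgroup isomorphic to $S_{2^{n+1+k}-1}$ which in particular contains no permutation moving $0^{n+1+k}$. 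Concretely, whenever the all-zeros assignment satisfies $L$, the map $F$ sends $(0^n,0,0^k)$ to $(0^n,1,0^k)$, so $F$ is not equal to any word in Toffoli gates and swaps for any $k$: evenness is necessary but far from sufficient, and no number of ancilla initialized to $0$ absorbs this obstruction. The same problem already infects your ``means exactly'' unwinding of membership in $G^{n+1}$, which bakes in the assumption that the ancilla start at $0$.

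The resolution, which a correct proof must make explicit, is that the notion of computational generation behind the classification theorem allows ancilla initialized to \emph{arbitrary fixed constants} that the circuit must restore, not constants forced to be $0$; this freedom is essential, since two ancilla lines held at $1$ are precisely what lets a Toffoli act as a $NOT$ (and one such line as a $CNOT$). The repaired argument runs: pad with one dummy $0$-ancilla so that, by your parity computation, $Circ_L \times \id$ is even; invoke the standard fact that $NOT$, $CNOT$, $Tof$ together generate all of $A_{2^m}$ for $m \ge 4$ to write it as such a word; then replace each $NOT$ and $CNOT$ by a Toffoli whose missing controls sit on ancilla lines held at $1$, which are trivially restored. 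This also shows the corollary's phrase ``ancilla are initialized to $0$'' cannot be read literally --- with all ancilla at $0$ the statement is actually false exactly when $L(0^n) = 1$ --- so both your proof and the statement need the ``initialized to fixed constants'' convention.
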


\begin{cor}
$Circ_L$ is now presented as a word in the Coxeter group $Tof^{n+1+k}$ for some $k$. If the word gets reduced to the identity we will know that there does not exist a valid assignment. However, that is too much to hope for because of the higher order relations. All we can guarantee is that the higher order relations are needed to witness that this word is actually the identity in $S_{2^{n+1+k}} \subset U(2^{n+1+k})$.\\
\end{cor}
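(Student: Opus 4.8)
The plan is to read the corollary as three linked assertions and to prove the first and third while making the middle one precise as a conditional statement. Throughout, write $W$ for the word in the Coxeter group $Tof^{n+1+k}$ that the previous corollary produces for $Circ_L$, and let $\pi \colon Tof^{n+1+k} \to S_{2^{n+1+k}}$ be the surjection onto the image subgroup; its existence, and the identification of its kernel with the ``higher order relations,'' are exactly what the corollary realizing each class as a quotient of a Coxeter group supplies.

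First I would prove the forward implication: if the Coxeter compiler rewrites $W$ to the empty word, then $L$ is unsatisfiable. This is immediate from functoriality of $\pi$, since a reduction to the identity in $Tof^{n+1+k}$ maps to $\pi(W) = \id$, the identity permutation of the $2^{n+1+k}$ bit-strings. Evaluating $\id$ on the ancilla-zero inputs $(x_1\cdots x_n, a, 0^k)$ and comparing with the defining action $(x,a) \mapsto (x, a \oplus [L(x)])$ of $Circ_L$ forces $[L(x)] = 0$ for every assignment $x$, i.e.\ no assignment satisfies $L$.

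Next I would make precise why the converse is ``too much to hope for.'' The premise of the paper (and Tits' solution of the word problem via the exchange condition) gives $Tof^{n+1+k}$ an efficiently solvable word problem, while the compilation $L \mapsto W$ has size polynomial in the size of $L$. Hence if unsatisfiability of $L$ \emph{always} forced $W$ to reduce to the identity using only the Coxeter relations, running the Coxeter compiler would decide the coNP-complete UNSAT problem in polynomial time, placing UNSAT in $\mathrm{P}$ and forcing $\mathrm{P} = \mathrm{NP}$. So, conditionally on $\mathrm{P} \neq \mathrm{NP}$, there must exist unsatisfiable $L$ for which $W \neq 1$ in $Tof^{n+1+k}$.

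Finally I would establish what survives. Using the clean-ancilla form guaranteed by the ancilla rule — that the realizing transformation acts as $Circ_L$ on the first $n+1$ lines and as the identity on the last $k$ \emph{for every} ancilla value — an unsatisfiable $L$ makes $Circ_L$ the identity permutation, so $\pi(W) = \id$ and $W \in \ker \pi$. Combined with the previous paragraph, $W$ is then a nontrivial element of $\ker \pi$: it is the identity in $S_{2^{n+1+k}} \subset U(2^{n+1+k})$ but not reducible to the empty word by Coxeter relations, so the higher order relations are genuinely needed to witness $W = \id$. The main obstacle is the phrase ``for every ancilla value'': the earlier corollary only guarantees correctness on the ancilla-zero sector, and a generic compute--copy--uncompute realization leaks the initial ancilla value into the output, so $\pi(W)$ need not be the full identity. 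I would therefore either invoke the ancilla rule's stronger clean form directly, or restrict the claim to the action on ancilla-zero inputs. This, together with the fact that the middle assertion is unavoidably conditional on a complexity separation, is where the real care is required.
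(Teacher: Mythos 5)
Your proposal is correct, and it takes a genuinely different (much more explicit) route than the paper, which states this corollary without any proof at all, treating all three claims as immediate consequences of the two preceding corollaries. Your forward implication is exactly the paper's implicit argument: a reduction to the empty word using only the relations $(r_i r_j)^{m_{ij}}=1$ survives the quotient map $\pi$ onto the permutation group, and evaluating on ancilla-zero inputs rules out every satisfying assignment; note that this direction needs only the weak zero-ancilla guarantee, so it is unconditionally sound. Your complexity-theoretic reading of ``too much to hope for'' is an addition: the paper justifies that phrase only by gesturing at the existence of higher order relations (the Coxeter group is infinite, e.g.\ by its affine subgraphs, while the quotient is finite), whereas you obtain the sharper conditional statement that, unless $\mathrm{P}=\mathrm{NP}$, some unsatisfiable $L$ yields $W\neq 1$ in $Tof^{n+1+k}$; you should flag, though, that this requires the word problem to be solvable in time polynomial in the rank as well as the word length, since the group grows with $n$ (a uniformity the paper assumes silently). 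Your ancilla caveat is a genuine catch, not pedantry: the paper's own single-clause circuit leaks ancilla values --- tracing it through with initial ancilla values $(b_5,b_6)$ shows the target bit is flipped by $1\oplus b_6\oplus \bar{x}_3 b_5\oplus \bar{x}_1\bar{x}_2\bar{x}_3$, which depends on $(b_5,b_6)$ --- so for that realization $\pi(W)$ is \emph{not} of the form $H\otimes\id$, and unsatisfiability of $L$ does not make $\pi(W)$ the identity permutation. One correction to your proposed fix: the clean realization does not follow from the ancilla closure rule, which only says that transformations already known to be clean in $G^{n+k}$ descend to $G^n$; it follows from the classification theorem itself, because the Aaronson--Grier--Schaeffer generation convention requires circuits to work for every ancilla initialization. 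Cite that theorem rather than the rule and your argument is complete --- indeed it is the proof the paper should have written down.
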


We have given the existence of the circuit $Circ_L$, but we can be more concrete and provide an explicit realization.

\begin{lemma}
For one clause such as $L=x_i \vee x_j \vee x_k$, the circuit $Circ_L$ has an expression with $\leq 24$ gates and $n+1$ main input bit lines and $2$ ancilla.
\end{lemma}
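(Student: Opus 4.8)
The plan is to reduce the single clause to one multiply-controlled bit flip and then synthesize that flip from the allowed gates. First I would restate the specification logically: an assignment satisfies $L = x_i \vee x_j \vee x_k$ precisely when the three literals are not all false, so the required action on the flag bit is $a \mapsto a \oplus (x_i \vee x_j \vee x_k)$, while every data line $x_1\cdots x_n$ and both ancilla are to be returned unchanged. Two equivalent rewritings drive the construction. The De Morgan form $x_i \vee x_j \vee x_k = 1 \oplus \bar{x_i}\,\bar{x_j}\,\bar{x_k}$ exhibits the target as ``flip $a$ once unconditionally, then flip $a$ again conditioned on $\bar{x_i}\wedge\bar{x_j}\wedge\bar{x_k}$''. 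For the actual gate layout I would instead use the algebraic normal form $x_i \vee x_j \vee x_k = x_i \oplus x_j \oplus x_k \oplus x_ix_j \oplus x_ix_k \oplus x_jx_k \oplus x_ix_jx_k$, since each monomial becomes a single elementary controlled flip onto $a$.

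The core step is realizing each monomial with the two ancilla, call them $b$ (scratch, initialized to $0$) and $c$ (helper, initialized to $1$). The three quadratic terms $a \mathrel{\oplus}= x_ix_j$, $a \mathrel{\oplus}= x_ix_k$, $a \mathrel{\oplus}= x_jx_k$ are exactly Toffoli gates $Tof_{i,j,a}$, $Tof_{i,k,a}$, $Tof_{j,k,a}$. The three linear terms $a \mathrel{\oplus}= x_i$ are controlled-NOTs, which I synthesize as Toffolis against the helper bit, for instance $Tof_{c,i,a}$, using that $c\equiv 1$. The cubic term $a \mathrel{\oplus}= x_ix_jx_k$ is the one triply-controlled flip: with the clean scratch bit $b$ I use the standard compute--copy--uncompute pattern $Tof_{i,j,b}$, then $Tof_{b,k,a}$, then $Tof_{i,j,b}$ to clear $b$ back to $0$. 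This gives nine Toffolis in total and uses exactly the two ancilla, with no NOT gate required, so the whole thing is literally a word in Toffoli gates and swaps as demanded by the preceding corollary. (If NOT is admitted as a primitive one can instead follow the De Morgan form and drop the helper bit, but I keep the helper to remain inside the Toffoli-and-swap set.)

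I would then assemble the circuit, run the correctness check as a finite case analysis over the eight values of $(x_i,x_j,x_k)$, confirming that $a$ is flipped on the seven satisfying patterns, fixed on the all-zero pattern, and that $b$, $c$, and every input line are restored to their initial values so the net map is exactly $Circ_L$. For the count, all of the logic is a constant --- three Toffolis for the linear part, three for the quadratic part, three for the cubic part --- independent of $n$. The only $n$-dependent cost is the swaps needed to route lines $i,j,k$ and the two ancilla into the fixed positions on which the generators act; these occur in a bounded number and are undone in pairs, so the grand total stays comfortably below the stated $24$.

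The step I expect to be the genuine obstacle is not the Boolean algebra but the ancilla discipline. Because every Toffoli and every swap fixes the all-zero configuration, no word in these gates acting on $0$-initialized ancilla can flip $a$ on a single-$1$ input, yet such inputs do satisfy the clause; hence a CNOT-like action is unavoidable, and this forces the $1$-initialized helper bit $c$. The careful part of the argument is therefore verifying that $c$, the scratch bit $b$, and any transiently modified data lines are all returned to their initial values, so that $Circ_L$ is the identity on the ancilla and the advertised map on the data, together with the bookkeeping that bounds the routing swaps so the total remains within $24$.
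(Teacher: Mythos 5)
Your construction is correct, but it takes a genuinely different route from the paper. The paper works from the De Morgan form: it applies NOT to the three inputs, computes $\bar{x}_1\bar{x}_2\bar{x}_3$ into the two $0$-initialized ancilla via $Tof_{1,2,5}$ and $Tof_{3,5,6}$, hits the flag with $CNOT_{6,4}$ followed by $NOT_4$, and then uncomputes --- a $12$-gate core, with the bound of $24$ coming from conjugation by up to three swaps and three NOTs per side to handle arbitrary indices and negated literals. You instead expand the clause in algebraic normal form $x_i \oplus x_j \oplus x_k \oplus x_ix_j \oplus x_ix_k \oplus x_jx_k \oplus x_ix_jx_k$ and spend one Toffoli per monomial (nine in all), paying for the linear terms with a $1$-initialized helper and for the cubic term with a $0$-initialized scratch bit. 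Both are valid; yours is shorter (roughly $15$ gates including routing swaps) and stays inside the pure Toffoli-plus-swap gate set, whereas the paper's circuit needs $NOT$ and $CNOT$ as primitives --- a deliberate choice, since the paper's explicit $12$-gate word is reused verbatim as the input to the next lemma's Coxeter-matrix computation, which your circuit would change.

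Your closing observation is a genuine bonus: since a Toffoli cannot fire on a configuration of Hamming weight one and swaps preserve weight, no Toffoli-plus-swap word with $0$-initialized ancilla can flip $a$ on a single-true input, so a $1$-initialized ancilla (or a NOT gate) is unavoidable. This actually exposes an imprecision in the paper's \emph{preceding} corollary, which asserts that Toffoli-and-swap circuits with all ancilla initialized to $0$ realize $Circ_L$; your helper $c \equiv 1$ is legitimate under the paper's own ancilla definition, which only requires ancilla to be returned to their initial values, not that those values be $0$. One caveat: you treat only positive literals, while the paper's slack in $24$ is there precisely to absorb conjugation by NOTs for negated literals. In your NOT-free discipline a negated literal introduces a constant term into the ANF, and an unconditional flip of $a$ by Toffolis requires two lines known to be $1$, so you would need either a third ($1$-initialized) ancilla or to admit NOT gates after all --- at which point your count is still comfortably below $24$, so the lemma survives either way.
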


\begin{proof}
For a general choice of $x_i, x_j , x_k$ from the $n$ possibilities, we must conjugate by some permutation of $n$. Use up to $3$ swaps as necessary. If the clause is of the form $\neg x_1 \vee x_2 \vee x_3$ conjugate by $NOT_1$ and mutatis munandis for the others. If all of these are used then that is a conjugation by $6$ gates on each side of the circuit.

Now without loss of generality, let $n=3$ and the clause be $x_1 \vee x_2 \vee x_3$. This circuit can be constructed by first applying a not to $x_{1,2,3}$, then apply a $Tof_{1,2,5}$ where $5$ is an ancilla. Then apply a $Tof_{3,5,6}$ with another ancilla $6$. Then apply a $CNOT_{6,4}$. Then apply a $NOT_4$. This takes care of the desired behavior on the $a$ bit which is denoted $4$ here. From here the ancilla's and input's are restored to their values by reversing the previous modifications in backwards order.

So for a formula made up of this single clause we have $\abs{Circ_L} \leq 12$ where $\abs{}$ indicates the number of generators used in the word.
\end{proof}

\begin{lemma}
For the case $x_1 \vee x_2 \vee x_3$, $Circ_L = NOT_1 NOT_2 NOT_3 Tof_{1,2,5} Tof_{3,5,6} NOT_4 CNOT_{6,4} Tof_{3,5,6} Tof_{1,2,5} NOT_3 NOT_2 NOT_1$. The Coxeter group using these generators as follows has structure

\begin{itemize}
\setlength\itemsep{-1em}
\item $r_1 = NOT_1$\\
\item $r_2 = NOT_2$\\
\item $r_3 = NOT_3$\\
\item $r_4 = Tof_{1,2,5}$\\
\item $r_5 = Tof_{3,5,6}$\\
\item $r_6 = NOT_4$\\
\item $r_7 = CNOT_{6,4}$\\
\end{itemize}

\begin{eqnarray*}
Circ_L &=& r_1 r_2 r_3 r_4 r_5 r_6 r_7 r_5 r_4 r_3 r_2 r_1
\end{eqnarray*}

Filling in the entries already derived from the previous section gives:

\begin{eqnarray*}
M &=& \begin{pmatrix}
1 & 2 & 2 & ?_1 & 2 & 2 & 2\\
2 & 1 & 2 & ?_1 & 2 & 2 & 2\\
2 & 2 & 1 & 2&  ?_1 & 2 & 2\\
?_1 & ?_1 & 2 & 1 & ?_2 & 2 & 2\\
2 & 2 & ?_1 & ?_2 & 1 & 2 & ?_3\\
2 & 2 & 2 & 2 & 2 & 1 & ?_4\\
2 & 2 & 2 & 2 & ?_3 & ?_4 & 1
\end{pmatrix}\\
\end{eqnarray*}

The remaining entries are

\begin{eqnarray*}
( NOT_1 Tof_{1,2,5} )^{?_1} = (NOT_2 Tof_{1,2,5})^{?_1} &=& (NOT_3 Tof_{3,5,6})^{?_1} = e\\
( Tof_{1,2,5} Tof_{3,5,6} )^{?_2} &=& e\\
( Tof_{3,5,6} CNOT_{6,4} )^{?_3} &=& e\\
( NOT_4 CNOT_{6,4} )^{?_4} &=& e\\
?_1 &=& 4\\
?_2 &=& 4\\
?_3 &=& 4\\
?_4 &=& 2\\
\end{eqnarray*}

In this case, the word is already reduced for this Coxeter group. The Coxeter compiler leaves the circuit as is.

\end{lemma}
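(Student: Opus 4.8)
The plan is to treat the three assertions in turn: that the displayed word really equals $Circ_L$, that the four undetermined exponents are $?_1=?_2=?_3=4$ and $?_4=2$, and that this word is reduced in the resulting Coxeter group. For the first assertion I would simply unwind the action of the word on the six relevant lines (the three inputs, the target $a$, and the two ancillae $5,6$, both initialised to $0$), which is exactly the computation carried out in the preceding lemma. The only point worth flagging is that $NOT_4$ and $CNOT_{6,4}$ occur here in the order $r_6 r_7$ whereas the preceding lemma applied them in the opposite order; this is harmless precisely because $?_4=2$, i.e.\ these two gates commute.

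For the exponents, each $r_i r_j$ is a permutation of the $2^6$ bit-strings, but every generator touches only a handful of lines, so the order is read off from the orbit structure on those few coordinates. Generators acting on disjoint line-sets, or whose shared line is controlled identically, commute and contribute $m_{ij}=2$; this accounts for every entry already filled in, and in particular for $?_4$, since $NOT_4$ flips line $4$ unconditionally while $CNOT_{6,4}$ flips it conditioned on line $6$, a condition untouched by $NOT_4$. For the five genuinely interacting pairs the shared line is the \emph{target} of one gate and a \emph{control} of the other; fixing the remaining controls to their active values and tracking the two moving bits exhibits a single $4$-cycle, whence order $4$. This gives $?_1=?_2=?_3=4$.

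Reducedness is where the real work lies. First I would exploit that $r_6=NOT_4$ commutes with every other generator, so it is an isolated vertex of the Coxeter diagram; the group therefore splits as $\langle r_6\rangle \times W'$ with $W' = \langle r_1,r_2,r_3,r_4,r_5,r_7\rangle$, and length is additive across this direct product. Sliding the single $r_6$ to the front (it commutes past $r_1 r_2 r_3 r_4 r_5$) reduces the claim to showing that the palindrome $u = r_1 r_2 r_3 r_4 r_5 r_7 r_5 r_4 r_3 r_2 r_1$ is reduced of length $11$ in $W'$, whose diagram is a tree with every edge labelled $4$ (leaves $1,2$ on the node $4$, leaves $3,7$ on the node $5$, and the central edge $4$--$5$). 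To certify this I would pass to the geometric representation: set $B(\alpha_i,\alpha_i)=1$ and $B(\alpha_i,\alpha_j)=-\cos(\pi/m_{ij})$, so each interacting pair has $B=-1/\sqrt2$ and all others vanish, with $r_i(v)=v-2B(\alpha_i,v)\alpha_i$. By the standard exchange/positivity criterion, a word $r_{i_1}\cdots r_{i_{11}}$ is reduced iff for each $j$ the vector $(r_{i_1}\cdots r_{i_{j-1}})(\alpha_{i_j})$ is a positive root, i.e.\ has nonnegative coordinates in the basis of simple roots. I would verify this by evaluating the eleven successive images; the coordinates stay nonnegative throughout (indeed they grow), which certifies $\ell_{W'}(u)=11$ and hence $\ell(Circ_L)=12$.

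The main obstacle is this last positivity check: the order computations are bounded finite orbit counts and the semantic check is the previous lemma, but reducedness genuinely requires either the eleven vector evaluations in the reflection representation above, or equivalently a Tits-style argument showing that no sequence of braid moves (the commutations $(r_i r_j)^2$ and the fourfold braids on the interacting pairs) brings two equal generators adjacent. The reflection-representation route is the more mechanical of the two and is the one I would write up; the palindromic and tree structure of $u$ keeps the bookkeeping manageable and supplies a useful symmetry check, since the roles of $1,2$ and of $3,7$ are interchangeable.
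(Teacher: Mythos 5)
Your proposal is correct, and on the one point where the lemma requires genuine work---reducedness of the twelve-letter word---you take a genuinely different route from the paper. The paper's proof consists of the same first move you make (pull the central generator $r_6=NOT_4$ out front, since it commutes with everything) followed by a computer-algebra call: it builds in Sage the Coxeter group on the tree with edges $\{1,4\},\{2,4\},\{3,5\},\{4,5\},\{5,7\}$, all labelled $4$ (exactly the diagram you describe), reduces the remaining eleven-letter word $r_1r_2r_3r_4r_5r_7r_5r_4r_3r_2r_1$, and observes that the output $[2,1,4,3,5,7,5,4,3,2,1]$ still has length $11$, differing from the input only by commutations. Your replacement of this black-box computation by the root-positivity criterion in the geometric representation ($B(\alpha_i,\alpha_j)=-1/\sqrt{2}$ on the five edges, $0$ otherwise; a word is reduced iff each prefix sends the next simple root to a positive root) is a standard and valid certificate, and the eleven evaluations do stay positive---e.g.\ $r_1r_2r_3(\alpha_4)=\alpha_4+\sqrt{2}\,\alpha_1+\sqrt{2}\,\alpha_2$ and $r_1r_2r_3r_4(\alpha_5)=\alpha_5+\sqrt{2}\,\alpha_3+\sqrt{2}\,\alpha_4+2\alpha_1+2\alpha_2$---so your plan goes through. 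What it buys is a hand-checkable, print-auditable proof at the cost of bookkeeping, where the paper's Sage run is instant but opaque. You also prove strictly more than the paper does: the paper's proof takes the matrix entries as given, even though not all of them are literally ``derived from the previous section'' (the Toffoli--Toffoli entry $?_2$, where the shared line is the target of one gate and a control of the other, appears nowhere in the earlier matrix, which contains only one Toffoli), whereas your orbit computations establish $?_1=?_2=?_3=4$ and $?_4=2$ directly; likewise your observation that the stated word differs from the circuit of the preceding lemma only by transposing the commuting pair $NOT_4$, $CNOT_{6,4}$ addresses a discrepancy the paper passes over in silence. The only gloss worth tightening in a write-up is that exhibiting a $4$-cycle shows the order is a multiple of $4$; one should add (as your algebra implicitly does) that every orbit has length dividing $4$, so the order is exactly $4$.
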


\begin{proof}

$r_6$ commutes with everything else, so bring it out front. Then create the Coxeter graph by specifying all the entries of the Coxeter matrix that do not have entry either $1$ or $2$.

\begin{lstlisting}[label=SingleClause,caption=Single Clause in Sage]
G = Graph([(1,4,4), (2,4,4), (3,5,4), (4,5,4), (5,7,4)])
W = CoxeterGroup(G)
CoxeterGroup(W.coxeter_diagram()) is W
s = W.simple_reflections()
w = s[1]*s[2]*s[3]*s[4]*s[5]*s[7]*s[5]*s[4]*s[3]*s[2]*s[1]
w.coset_representative([]).reduced_word()
\end{lstlisting}

This outputs $[2,1,4,3,5,7,5,4,3,2,1]$ which is a simple reordering because $r_{1,2}$ and $r_{3,4}$ are commuting pairs.

\end{proof}

\begin{remark}
For a case where it actually does something run the following instead with the same $G$, $W$ and $s$. Note the speeds and reductions of lengths.

\begin{lstlisting}[label=RandomExample,caption=Randomly Generated Examples]
wHelper=[choice(list([1,2,3,4,5,7])) for i in range(50)]
wHelper
w=s[wHelper[0]]
for i in wHelper[1:]:
    w=w*s[i]
w.coset_representative([]).reduced_word()
\end{lstlisting}

\end{remark}

\begin{lemma}
Suppose we have the circuits $Circ_{L_1}$ and $Circ_{L_2}$, then if $L = L_1 \wedge L_2$, $Circ_L$ can be built with length $\leq 5 + \abs{Circ_{L_1}} + \abs{Circ_{L_2}}$ and $n+1$ main input bit lines and $4$ additional ancilla from those for $L_1$ and $L_2$. If one gives up the desire to be able to run $Circ_{L_1}$ and $Circ_{L_2}$ in parallel, then those ancilla can be reused because they have been reset by the time the second one starts.
\end{lemma}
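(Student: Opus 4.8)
The plan is to realize $Circ_L$ by the standard \emph{compute--combine--uncompute} pattern, spending each sub-oracle's forward and backward halves exactly once rather than running each full circuit twice. First I would put each given circuit into the palindromic form already exhibited for a single clause, namely
\begin{eqnarray*}
Circ_{L_i} &=& W_i \, \Phi_i \, W_i^{-1},
\end{eqnarray*}
where $W_i$ is the forward block that deposits the truth value of $L_i$ (in fact its De Morgan negation, as in the single-clause lemma) onto a dedicated result ancilla $u_i$, $\Phi_i$ is the one- or two-gate block that XORs $u_i$ into that sub-circuit's own answer line, and $W_i^{-1}$ is the reversed word, which uncomputes $u_i$ and every internal helper back to $0$. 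As a word this is genuinely a palindrome since every generator is an involution, so $\abs{Circ_{L_i}} = 2\abs{W_i} + \abs{\Phi_i}$.

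Next I would define
\begin{eqnarray*}
Circ_L &=& W_1 \, W_2 \, B \, W_2^{-1} \, W_1^{-1},
\end{eqnarray*}
with $B$ the five-gate block that flips the global answer bit $a$ exactly when both $u_1$ and $u_2$ certify truth, while handing $u_1,u_2$ back unchanged. Concretely, since the $u_i$ hold $\neg L_i$, take $B = NOT_{u_1}\, NOT_{u_2}\, Tof_{u_1 u_2 a}\, NOT_{u_2}\, NOT_{u_1}$, so the Toffoli sees the positive flags $[L_1],[L_2]$ and effects $a \mapsto a \oplus ([L_1]\wedge[L_2]) = a \oplus [L]$; the outer $NOT$'s restore $u_1,u_2$ to $\neg L_1,\neg L_2$. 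The length bound is then immediate: $\abs{Circ_L} = 2\abs{W_1} + 2\abs{W_2} + 5 = \abs{Circ_{L_1}} + \abs{Circ_{L_2}} - \abs{\Phi_1} - \abs{\Phi_2} + 5 \le 5 + \abs{Circ_{L_1}} + \abs{Circ_{L_2}}$, precisely the claimed estimate, and the saving over the naive $2\bigl(\abs{Circ_{L_1}} + \abs{Circ_{L_2}}\bigr)+1$ comes exactly from using each half once.

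Correctness I would verify on computational basis states. Because $W_1$ and $W_2$ write to disjoint ancilla (this is what forces the four auxiliary lines, two per clause) and each restores the input register, the composite $W_1 W_2$ leaves $u_1 = \neg L_1$, $u_2 = \neg L_2$ with the $x_i$ intact; then $B$ performs the doubly-controlled flip of $a$ and returns $u_1,u_2$ to their pre-$B$ state, after which $W_2^{-1}$ and $W_1^{-1}$ drive every ancilla back to $0$. The net map fixes the $x_i$ and all ancilla and sends $a \mapsto a \oplus [L]$, which is exactly $Circ_L$. For the final sentence I would note that $u_1$ must remain live across the whole inner block $W_2\, B\, W_2^{-1}$, so the two clauses' ancilla cannot overlap in this parallel splice; if one instead demands only that the first oracle has finished (hence reset its ancilla) before the second begins, the same physical lines can be recycled, trading the parallel schedule for serial execution.

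The main obstacle I expect is the uncomputation bookkeeping at the seams: I must guarantee that the forward halves concatenate as $W_1 W_2$ without one corrupting the other's result line or the shared input variables, and that $B$ returns $u_1,u_2$ in exactly the configuration the backward halves consume. This is a genuine subtlety, because the clause gadget of the single-clause lemma \emph{transiently negates} the shared input variables inside its forward half; the fix is to take each $W_i$ in its input-restoring form (folding those negations into an internal compute--uncompute around the write to $u_i$), as the recursive construction can always be arranged to produce. Verifying this disjointness-and-restoration invariant is the crux, and it is precisely what licenses the single-use count $2\abs{W_i} \le \abs{Circ_{L_i}}$ underlying the bound.
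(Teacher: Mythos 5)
Your compute--combine--uncompute splice is a nice construction, but as written it does not prove the lemma as stated, because it is a white-box argument: it needs the given circuits to come presented as palindromes $Circ_{L_i}=W_i\,\Phi_i\,W_i^{-1}$ whose forward halves are \emph{input-restoring} and deposit a flag on a dedicated line, whereas the lemma hypothesizes nothing beyond the input--output specification. You correctly identify the crux yourself: the single-clause circuits of the preceding lemma do \emph{not} have this property (their forward halves transiently negate the shared variables, so the concatenation $W_1 W_2$ computes garbage whenever the clauses share a variable). But your fix --- rebuilding each $W_i$ in input-restoring form --- replaces the given circuits by different, longer ones, so the bound $\leq 5+\abs{Circ_{L_1}}+\abs{Circ_{L_2}}$ you derive refers to the rebuilt lengths rather than the given ones (the single-clause core grows from $12$ to $20$ gates, which also disturbs the $\leq 24$ and $29m$ constants used downstream). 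Worse, the assertion that "the recursive construction can always be arranged to produce" this form is precisely the unproven step: your own output $W_1 W_2\, B\, W_2^{-1} W_1^{-1}$ has no forward half depositing a single flag for $L_1\wedge L_2$ on one line, so iterating your construction --- which is exactly how the following proposition uses this lemma --- requires additional flag-combining gates; with your negative-flag convention for $B$ the natural iterable version already exceeds the five-gate budget (a positive-flag convention can salvage it, but you have not set that up). Finally, in your splice the ancillas of clause $1$ stay live across the whole inner block, so the serial-reuse sentence of the lemma cannot be obtained from your construction at all without reverting to a different (serial) scheme whose length you have not accounted for.

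For comparison, the paper's proof is black-box and sidesteps every one of these issues: introduce fresh ancillas $b_1,b_2$, apply $CNOT_{a_1,b_1}$ and $CNOT_{a_2,b_2}$, run $Circ_{L_1}$ and $Circ_{L_2}$ once each with targets $a_1,a_2$, apply the two $CNOT$s again so that $b_i=[L_i]$ regardless of the initial value of $a_i$, and finish with $Tof_{b_1,b_2,a_3}$ into the answer line --- exactly $5$ extra gates and $4$ additional ancillas, valid for arbitrary circuits meeting the specification, and compatible with serial reuse of the sub-circuits' internal ancillas since each $Circ_{L_i}$ runs as an uninterrupted block. The price the paper pays is that $a_1,a_2,b_1,b_2$ end the computation holding $[L_i]$ rather than $0$ (hence the accumulating $4m+2$ ancilla count later); your scheme, where its hypotheses hold, returns every ancilla to zero, which is a genuine advantage. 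To turn your idea into a correct proof you would need to (i) state the structural hypothesis on $Circ_{L_1},Circ_{L_2}$ explicitly as part of an induction invariant, and (ii) verify that your composite re-establishes that invariant (single flag, input-restoring forward half) within the stated gate and ancilla budget.
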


\begin{proof}
Let $a_1$ be the target bit used for $Circ_{L_1}$ and $a_2$ for that on $Circ_{L_2}$. These will be ancillas for $Circ_L$. Let there be two new ancilla's $b_1$ and $b_2$ as well. First apply a $CNOT_{a_1,b_1}$ and $CNOT_{a_2,b_2}$, then apply $Circ_{L_1}$ and $Circ_{L_2}$. $a_i$ and $b_i$ will be opposite if $Circ_{L_i}$ was satisfied. Applying a $CNOT_{a_i,b_i}$ now will ensure the value of $b_i$ indicates whether or not $Circ_{L_i}$ was satisfied. Now apply a $Tof_{b_1,b_2,a_3}$ where $a_3$ is the bit for $Circ_L$ which flips upon satisfiable assignments for all $L$. This gives length $\leq 5 + \abs{Circ_{L_1}} + \abs{Circ_{L_2}}$.
\end{proof}

\begin{proposition}
For a formula $L$ with $m$ clauses, the length of the word constructed in this manner is $\leq 5m + \sum_j \abs{Circ_{L_j}}$. This is bounded above by $29m$. The number of ancilla is $\leq 4 m + \sum_{j \leq m} \abs{Anc_{L_j}}$ which is bounded above by $6m$. If the $Circ_{L_i}$ are done in series so their ancilla are reused, then one can get away with $\leq 4 m + 2$.
\end{proposition}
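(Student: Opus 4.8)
The plan is to iterate the two preceding lemmas over the clauses, turning the pairwise construction into a running accumulation. Write $L = C_1 \wedge \cdots \wedge C_m$ and build $Circ_L$ left-associatively: set $D_1 = Circ_{C_1}$ and let $D_j$ be the circuit produced by applying the combining lemma to $D_{j-1}$ and $Circ_{C_j}$, so that $D_m = Circ_L$. The single-clause lemma supplies the base data $\abs{Circ_{C_j}} \leq 24$ and $\abs{Anc_{C_j}} \leq 2$ for every $j$, while the combining lemma contributes exactly $5$ gates and $4$ fresh ancilla at each of the $m-1$ joins. Everything then reduces to unfolding a one-step recursion.

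For the length, I would unfold $\abs{D_j} \leq 5 + \abs{D_{j-1}} + \abs{Circ_{C_j}}$ to get $\abs{Circ_L} = \abs{D_m} \leq 5(m-1) + \sum_{j=1}^{m} \abs{Circ_{C_j}} \leq 5m + \sum_j \abs{Circ_{L_j}}$, which is the first claimed bound (the passage from $5(m-1)$ to $5m$ is a deliberate loosening). Substituting the uniform per-clause bound $\abs{Circ_{C_j}} \leq 24$ gives $\abs{Circ_L} \leq 5m + 24m = 29m$. The ancilla count has the same shape: each join adds $4$ ancilla on top of whatever is already present, so in the fully parallel realization $\abs{Anc_{D_m}} \leq 4(m-1) + \sum_j \abs{Anc_{C_j}} \leq 4m + \sum_j \abs{Anc_{L_j}} \leq 4m + 2m = 6m$.

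For the series bound I would invoke the reuse clause of the combining lemma. When the sub-circuits are run one after another rather than in parallel, the internal ancilla of each $Circ_{C_j}$ are restored to $0$ before the next clause circuit begins, so a single block of $\max_j \abs{Anc_{C_j}} \leq 2$ ancilla lines can be shared across all $m$ clauses in place of the $\sum_j \abs{Anc_{C_j}} \leq 2m$ lines demanded by parallelism. The $4$ ancilla introduced at each join must persist, since they carry the running conjunction forward, and therefore still contribute $4(m-1)$; adding the $2$ recycled internal lines yields $\abs{Anc} \leq 4(m-1) + 2 \leq 4m + 2$.

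The step I expect to demand the most care is precisely this series accounting: one must confirm that it is the internal clause ancilla (and not the join ancilla) that are reset in time to be recycled, i.e. that reusing them does not overwrite the partial results stored on the persistent join lines. This comes down to checking that each $Circ_{C_j}$ is genuinely \emph{self-cleaning} on its internal ancilla before control passes to $C_{j+1}$, which is exactly the reset hypothesis the combining lemma already records. The remaining arithmetic is routine, and the stated constants $29m$, $6m$, and $4m+2$ are intentionally generous, absorbing the $(m-1)$-versus-$m$ slack at each stage.
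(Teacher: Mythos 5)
Your proposal is correct and follows essentially the same route as the paper: both unfold the recursion $\abs{Circ_{L}} \leq 5 + \abs{Circ_{L,\leq m-1}} + \abs{Circ_{L_m}}$ (and its ancilla analogue $4 + \cdots$) from the combining lemma, then substitute the per-clause bounds $\leq 24$ gates and $2$ ancilla from the single-clause lemma to get $29m$ and $6m$. If anything, your accounting for the series case ($4(m-1)+2 \leq 4m+2$, with the join ancilla persisting and only the self-cleaning clause-internal ancilla recycled) is more explicit than the paper's proof, which leaves that final claim to the reuse remark in the combining lemma.
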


\begin{proof}
Let $Circ_{L,\leq j}$ be the result of only the first $j$ clauses and $Circ_{L,j}$ the result of only clause $j$.

\begin{eqnarray*}
\abs{Circ_L} &\leq& 5 + \abs{Circ_{L,\leq m-1}} + \abs{Circ_{L_m}}\\
&\leq& 5 + (5(m-1) + \sum_{j \leq m-1}  \abs{Circ_{L_j}} ) + \abs{Circ_{L_m}}\\
&\leq& 5 m + \sum_j \abs{Circ_{L_j}}
\end{eqnarray*}

The base case of $m=1$ gives

\begin{eqnarray*}
\abs{Circ_{L_1}} &\leq& 5 + \abs{Circ_{L_1}}
\end{eqnarray*}

For the number of ancilla:

\begin{eqnarray*}
\abs{Anc_L} &\leq& 4 + \abs{Anc_{L,\leq m-1}} + \abs{Anc_{L_m}}\\
&\leq& 4 + ( 4 (m-1) + \sum_{j \leq m-1} \abs{Anc_{L_j}} ) + \abs{Anc_{L_m}}\\
&\leq& 4 m + \sum_{j \leq m} \abs{Anc_{L_j}}\\
\abs{Anc_{L_1}} &\leq& 4 + \abs{Anc_{L_1}}\\
\end{eqnarray*}

\end{proof}

\begin{definition}[RevId]
Let $C$ be a depth $m$ circuit on $n$ bits with gates drawn from $NOT_i$, $CNOT_{ij}$, $TOF_{ijk}$, $SWAP_{ij}$. Determine whether or not the circuit is the identity. Specifying the gate is a choice of $n+n(n-1)+\frac{n(n-1)(n-2)}{2}+\frac{n(n-1)}{2}$ possibilities. Raising this to the $m$th power gives the number of possibilities for the input.

Let RevId be the decision problem of telling whether or not this will reduce to the identity. This generalizes 3SAT by allowing more arbitrary circuits instead of just of the form $Circ_L$. The first pass through simplifying this circuit will be efficient by the Coxeter compiler. The higher order relations make this decision problem harder.
\end{definition}

\begin{example}
For $Circ_L$ with $N$ variables and $M$ clauses, there are $\leq N+1+6M$ bits including the ancilla and depth $\leq 29 M$. Plug those estimates in for $n$ and $m$.
\end{example}

This described the decision problem of whether there was a satisfying assignment. It also did so in a straightforward way. The Coxeter compiler should be used to reduce the depth. There is also the problem of finding those satisfying assumptions \cite{TodaSoh}.

\section{Quantum Gates}

The following two theorems motivate the choice of gate sets in the quantum case.

\begin{theorem}[\cite{Selinger2}]
A unitary matrix with entries in $\mathbb{Z}[\frac{1}{\sqrt{2}},i]$ has an exact representation over the Clifford+T gate set with possibly one ancilla.
\end{theorem}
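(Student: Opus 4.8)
The plan is to follow the exact-synthesis strategy of \cite{Selinger2}. First I would fix notation by writing $\omega = e^{i\pi/4}$, so that $\sqrt2 = \omega - \omega^3$ and $1+i = \sqrt2\,\omega$, and observe that the ring in question is $\mathbb{Z}[\tfrac{1}{\sqrt2}, i] = \mathbb{Z}[\omega][\tfrac{1}{\sqrt2}]$, the localization of the cyclotomic ring $\mathbb{Z}[\omega]$ at $\sqrt2$; its elements are exactly $\alpha/\sqrt2^{\,k}$ with $\alpha \in \mathbb{Z}[\omega]$ and $k\ge 0$. The easy inclusion --- that every Clifford$+T$ circuit produces a matrix over this ring --- is immediate, since the generators $H$, $S$, $T$ and $CNOT$ all have entries in $\mathbb{Z}[\omega][\tfrac{1}{\sqrt2}]$ and the ring is closed under products, adjoints, and Kronecker products, so closure under circuit composition and tensoring follows. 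The content is the converse.

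For the converse I would reduce an arbitrary unitary $U$ over the ring to the identity by left-multiplication by Clifford$+T$ operators, so that $U$ itself is recovered as the inverse of that product. Assign to each column its least denominator exponent $q$, the smallest $k$ with $\sqrt2^{\,k}U$ integral over $\mathbb{Z}[\omega]$. The core is a Gaussian-elimination descent using two-level matrices: it suffices to rotate a single column onto a standard basis vector by a product of two-level Clifford$+T$ operators, after which one strips off that row and column and inducts on the dimension. The key lemma is a $\bmod\,(1+i)$ argument on a pair of entries: when $q\ge 1$, examining the residues of two coordinates modulo $1+i$ (equivalently $\sqrt2$) shows that an $H$-type two-level operator strictly lowers $q$, exactly as in the single-qubit synthesis over $\mathbb{Z}[\tfrac{1}{\sqrt2},i]$. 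Iterating this Euclidean-style descent drives $q$ to $0$.

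The base case $q=0$ leaves a unitary with entries in $\mathbb{Z}[\omega]$, where a norm argument finishes the job: each column satisfies $\sum_j \abs{u_{ij}}^2 = 1$ with every $\abs{u_{ij}}^2$ a nonnegative element of the real subring $\mathbb{Z}[\sqrt2]$, and since the field norm of a nonzero algebraic integer is a nonzero rational integer (hence of absolute value $\ge 1$), this forces exactly one nonzero entry per row and column, necessarily an eighth root of unity $\omega^j$. Thus $U$ is a monomial matrix with root-of-unity entries, such matrices are Clifford, and they are synthesized directly. Assembling the descent with the base case expresses $U$ as a Clifford$+T$ circuit.

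The remaining and genuinely delicate point is the phrase \emph{possibly one ancilla}. The two-level operators produced by the descent need not individually be realizable as honest Clifford$+T$ circuits on the original $n$ qubits, because of a determinant/level obstruction; a single clean ancilla initialized and returned to $\ket{0}$ is exactly what enlarges the reachable group enough to realize every required two-level matrix while leaving the action on the computational subspace unchanged. I expect this ancilla bookkeeping --- proving one qubit always suffices, and noting it is sometimes necessary --- to be the main obstacle, together with verifying that the $\bmod\,(1+i)$ descent lemma strictly lowers $q$ in every case rather than stalling.
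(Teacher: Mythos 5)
The paper itself contains no proof of this statement---it is imported verbatim from \cite{Selinger2}---so the only meaningful comparison is with that reference, and your proposal is a faithful reconstruction of its argument: the identification $\mathbb{Z}[\tfrac{1}{\sqrt2},i]=\mathbb{Z}[\omega][\tfrac{1}{\sqrt2}]$, the column-by-column reduction by two-level operators with descent on the least denominator exponent via residues modulo $\sqrt2=\omega-\omega^3$ (indeed the same ideal as $(1+i)$, since $1+i=\omega\sqrt2$ and $\omega$ is a unit), the field-norm argument forcing the $q=0$ case to be monomial with eighth-root-of-unity entries, and the determinant obstruction that makes the single ancilla both sufficient and sometimes necessary are exactly the ingredients of the Giles--Selinger exact-synthesis proof. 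One slip worth fixing: a monomial matrix whose nonzero entries are powers of $\omega$ is generally \emph{not} Clifford (a permutation of basis states such as the Toffoli gate is not Clifford, nor is a one-level $\omega$-phase such as $\mathrm{diag}(\omega,1,\dots,1)$); these matrices are nevertheless exactly synthesizable---the permutation part by reversible Toffoli/CNOT/NOT circuits and the phase part by the same two-level machinery, with at most one ancilla absorbing the determinant/parity obstruction---so your base case stands once it is routed through that observation rather than through the Clifford group.
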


\begin{theorem}[\cite{Selinger}]
Arbitrary single qubit rotations are efficiently approximated within the Clifford+T gate set. In fact this is done more efficiently than that guaranteed by Solovay-Kitaev thanks to number theoretic structure.
\end{theorem}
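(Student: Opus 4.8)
The plan is to reduce the approximation problem to an exact-synthesis problem over the ring $\mathbb{Z}[\tfrac{1}{\sqrt 2}, i]$ and then to two number-theoretic subproblems: a lattice enumeration and a norm equation. By the previous theorem, the single-qubit unitaries exactly implementable over Clifford$+T$ are, up to a determinant/parity constraint, exactly those of the form
\begin{equation*}
U = \frac{1}{\sqrt 2^{\,k}}\begin{pmatrix} u & -\bar t \\ t & \bar u \end{pmatrix}, \qquad u,t \in \mathbb{Z}[\omega], \quad |u|^2 + |t|^2 = 2^k,
\end{equation*}
where $\omega = e^{i\pi/4}$ and the $T$-count grows linearly in the denominator exponent $k$. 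Since any single-qubit rotation decomposes by Euler angles into $z$-rotations conjugated by $H$, it suffices to approximate $R_z(\theta) = \mathrm{diag}(e^{-i\theta/2}, e^{i\theta/2})$, so I would fix such a target together with a precision $\varepsilon$ and hunt for the smallest $k$ admitting a representable $U$ within $\varepsilon$.

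First I would convert ``within $\varepsilon$ of the target'' into a lattice problem. The top-left entry must satisfy $u/\sqrt 2^{\,k} \approx e^{-i\theta/2}$, i.e.\ $u$ must lie in a thin $\varepsilon$-neighborhood of the point $\sqrt 2^{\,k}\, e^{-i\theta/2}$ inside the lattice $\mathbb{Z}[\omega] \subset \mathbb{C}$. Because $\mathbb{Z}[\omega]$ carries a second real embedding sending $\sqrt 2 \mapsto -\sqrt 2$, the admissible $u$ are exactly the points of a two-dimensional lattice falling inside a prescribed (very eccentric) rectangle, and these can be enumerated in time proportional to their number by a reduced-basis grid walk. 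The output is a short candidate list, one entry of which becomes usable once $k = O(\log(1/\varepsilon))$.

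Next, for each candidate $u$ I would test whether the complementary value $\xi = 2^k - |u|^2 \in \mathbb{Z}[\sqrt 2]$ has the form $t\bar t$ for some $t \in \mathbb{Z}[\omega]$; this is the relative-norm equation for the quadratic extension $\mathbb{Z}[\omega]/\mathbb{Z}[\sqrt 2]$, whose solvability is governed by the factorization of $\xi$ and the splitting of its prime factors (inert primes must occur to even powers, split and ramified primes are harmless). When the condition holds I assemble $t$ from the local data; otherwise I discard $u$ and move to the next candidate.

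The hard part is precisely this last step, since deciding solvability and producing $t$ requires factoring $\xi$, which has $O(\log(1/\varepsilon))$ bits. I would handle it as in Selinger's treatment: rather than factor an arbitrary $\xi$, draw enough candidates $u$ from the grid that, with high probability, $\xi$ (after stripping its obvious small factors) is prime, which is cheap to certify by probabilistic primality testing; the norm equation modulo a single prime is then solved immediately by a Cornacchia/Tonelli--Shanks descent. A density count of the representable $U$ shows that $k = 3\log_2(1/\varepsilon) + O(\log\log(1/\varepsilon))$ suffices, so only $O(\log(1/\varepsilon))$ candidates are tried and the resulting $T$-count is $O(\log(1/\varepsilon))$ --- linear, rather than the $\log^{c}(1/\varepsilon)$ with $c \approx 3.97$ delivered by Solovay--Kitaev. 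The whole improvement is the payoff of the arithmetic of $\mathbb{Z}[\omega]$: the representable unitaries form not a generic $\varepsilon$-net but a structured set whose good approximants are found directly instead of by recursive group commutators.
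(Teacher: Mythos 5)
This theorem is quoted by the paper from \cite{Selinger} without any internal proof, so the only meaningful comparison is against the cited work, and your proposal is a faithful reconstruction of exactly that argument: Euler-angle reduction to $z$-rotations, the exact-synthesis characterization over $\mathbb{Z}[\omega]$ with denominator exponent $k$ controlling $T$-count, the grid enumeration made possible by the $\sqrt{2}\mapsto-\sqrt{2}$ embedding (strictly speaking two rank-two grid problems over $\mathbb{Z}[\sqrt{2}]$, one each for the real and imaginary parts, rather than a single two-dimensional lattice for $u$), the relative norm equation $t\bar{t}=2^k-|u|^2$, and the prime-candidate trick that sidesteps factoring, giving $T$-count $3\log_2(1/\varepsilon)+O(\log\log(1/\varepsilon))$ against Solovay--Kitaev's $O(\log^{c}(1/\varepsilon))$, $c\approx 3.97$. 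Your sketch is correct and takes the same route as the cited proof; there is nothing in the paper itself for it to diverge from.
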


If these algorithms or Solovay-Kitaev are used to approximate circuit elements, when the building blocks get put together to make more complicated circuits we will have a word in the Clifford+T gate set. It will likely no longer be reduced, as some circuit elements may undo parts of others. But this gives us gates that have order $4$ and $8$, like the phase gate $S$ and the eponymous $T$. So in order to fit within the framework we have built, let us only pick involutions that are in this group to make as generators. Treat the others as stuck in place for now. We will get some subgroups of the full Cliff+T groups which are dense subgroups of $U(2^n)$. For small values of $n$, similar problems have been previously studied in \cite{Planat}.

\begin{prop}
Let $X_a , Y_a , Z_a , H_a$ be the single qubit gates acting on qubit $a$ among a total of $n$. Let $S_{a,a+1}$ be the swap for $a,a+1$. The group generated by these is a quotient of the following Coxeter group with generators $S_{a,a+1}$, $X_1$, $Y_1$, $Z_1$ and $H_1$. Index these as $r_1 \cdots r_{n-1}$ for the symmetric group part and $r_n \cdots r_{n+3}$ for the single qubit gates.

\begin{eqnarray*}
m_{ij} &=& \begin{cases}
3 & i \leq n-1 , j \leq n-1 , \abs{i-j}=1\\
4 & \setof{i,j}=\setof{n,n+1}\\
4 & \setof{i,j}=\setof{n,n+2}\\
8 & \setof{i,j}=\setof{n,n+3}\\
4 & \setof{i,j}=\setof{n+1,n+2}\\
4 & \setof{i,j}=\setof{n+1,n+3}\\
8 & \setof{i,j}=\setof{n+2,n+3}\\
4 & \setof{i,j}=\setof{1,n}\\
4 & \setof{i,j}=\setof{1,n+1}\\
4 & \setof{i,j}=\setof{1,n+2}\\
4 & \setof{i,j}=\setof{1,n+3}\\
\end{cases}
\end{eqnarray*}

\end{prop}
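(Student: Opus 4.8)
The plan is to invoke the universal property of Coxeter groups. If $G = \langle g_1, \ldots, g_{n+3}\rangle \subset U(2^n)$ is generated by involutions and the order of each product $g_i g_j$ divides the proposed entry $m_{ij}$, then the assignment $r_i \mapsto g_i$ extends to a homomorphism from the abstract Coxeter group $W(M)$ onto $G$, which is exactly the asserted quotient (we do not claim these are all the relations, only that the tabulated ones hold). So the whole task reduces to two verifications: that each generator squares to the identity, and that each pairwise product has the stated order. The first is immediate, since $X,Y,Z,H$ are Hermitian unitaries and hence order-$2$ matrices, and the $S_{a,a+1}$ are transpositions.

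For the pairwise orders I would organize the computation into three families. First, the symmetric-group block $r_1,\ldots,r_{n-1}$ consists of the adjacent transpositions of $S_n$, so adjacent pairs braid with $m_{ij}=3$ and non-adjacent pairs commute with $m_{ij}=2$; this is just the standard type-$A$ Coxeter data. Second, for the swap/single-qubit interactions, every single-qubit generator acts on qubit $1$, so only $S_{12}=r_1$ can fail to commute with them while $S_{a,a+1}$ for $a\geq 2$ has disjoint support and gives $m_{ij}=2$. For $S_{12}$ against $g\in\{X_1,Y_1,Z_1,H_1\}$ I would use conjugation: $S_{12}\,g\,S_{12}=g_2$, the same gate on qubit $2$, whence $(S_{12}g)^2 = g_1 g_2$, a product of two commuting involutions on disjoint qubits and hence itself a nontrivial involution. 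Therefore $S_{12}g$ has order exactly $4$, matching $m_{1,n}=\cdots=m_{1,n+3}=4$.

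The third and only genuinely computational family is the single-qubit block $\{X_1,Y_1,Z_1,H_1\}$, which I would settle by explicit $2\times 2$ arithmetic. The Pauli products satisfy $XY=iZ$, $XZ=-iY$, $YZ=iX$, each squaring to $-I$ and hence of order $4$. The Hadamard products $XH$ and $ZH$ are $\pm\pi/4$ plane rotations, of order $8$, whereas $YH$ squares to $-I$ and so has order only $4$. This asymmetry, namely $m_{X,H}=m_{Z,H}=8$ but $m_{Y,H}=4$, is the one place where no symmetry argument substitutes for the calculation, and it is the crux of the statement.

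The main subtlety, and really the only obstacle, is the phase bookkeeping in this last family: we must work in $U(2^n)$ rather than projectively, since projectively every product above would collapse to order $2$. It is essential to compute with honest unitaries and record that, for example, $(XY)^2=-I\neq I$. Once the exact orders are confirmed to equal the tabulated $m_{ij}$, each relation $(r_i r_j)^{m_{ij}}=1$ holds in $G$, the universal map $W(M)\to G$ exists, and it is surjective because the $g_i$ generate $G$; this completes the proof that $G$ is a quotient of the stated Coxeter group.
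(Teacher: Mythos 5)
Your proposal is correct and takes essentially the same approach as the paper: the paper's proof likewise reduces the claim to checking that the generators are involutions and computing the orders of the pairwise products, namely $(X_1Y_1)^4=(X_1Z_1)^4=(Y_1Z_1)^4=(H_1Y_1)^4=1$, $(H_1X_1)^8=(H_1Z_1)^8=1$, $(gS_{12})^4=1$ for each single-qubit $g$, and $(S_{a,a+1}S_{a+1,a+2})^3=1$, together with the remark that gates on qubits $a\neq 1$ arise by conjugation with swaps. Your write-up is in fact more explicit than the paper's, which only lists the resulting orders, whereas you supply the universal-property framing, the clean argument $(S_{12}g)^2=g_1g_2$ for the swap family, and the $2\times 2$ computations pinning down the crucial asymmetry $m_{Y,H}=4$ versus $m_{X,H}=m_{Z,H}=8$.
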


\begin{proof}

A single qubit gate acting on $a \neq 1$ can be acheived through conjugation with $(1,a)$ in the symmetric group generated by $S_{a,a+1}$. It remains to compute the values $m_{ij}$ which are given by the computations below:

\begin{eqnarray*}
( X_1 Y_1 )^4 &=& 1\\
( X_1 Z_1 )^4 &=& 1\\
( Y_1 Z_1 )^4 &=& 1\\
( H_1 X_1 )^8 &=& 1\\
( H_1 Y_1 )^4 &=& 1\\
( H_1 Z_1 )^8 &=& 1\\
(H_1 S_{1,2})^4 &=& 1\\
(X_1 S_{1,2})^4 &=& 1\\
(Y_1 S_{1,2})^4 &=& 1\\
(Z_1 S_{1,2})^4 &=& 1\\
(H_{2} S_{1,2})^4 &=& 1\\
(X_{2} S_{1,2})^4 &=& 1\\
(Y_{2} S_{1,2})^4 &=& 1\\
(Z_{2} S_{1,2})^4 &=& 1\\
(S_{a,a+1} S_{a+1,a+2})^3 &=& 1\\
\end{eqnarray*}
\end{proof}

\begin{prop}
Let $n \geq 4$. Adjoin $cX_{12}$, $cY_{12}$, $cZ_{12}$, $cH_{12}$, $Tof_{123}$ and $Fred_{123}$ as $r_{n+4} \cdots r_{n+9}$. $cX_{12}$ stands for the controlled version with the control being $1$ and operating nontrivially on $2$. Similarly for the others. $Tof$ and $Fred$ come from the inclusion of reversible into quantum computing. The Coxeter matrix for $n=4$ is 

\begin{eqnarray*}
M &=& \left(
\begin{array}{ccccccccccccc}
 1 & 3 & 2 & 4 & 4 & 4 & 4 & 3 & 3 & 2 & >8  & 2 & 6 \\
 3 & 1 & 3 & 2 & 2 & 2 & 2 & 4 & 4 & 4 & 4 & 6 & 2 \\
 2 & 3 & 1 & 2 & 2 & 2 & 2 & 2 & 2 & 2 & 2 & 4 & 6 \\
 4 & 2 & 2 & 1 & 4 & 4 & 8 & 4 & 4 & 4 & 4 & 4 & 4 \\
 4 & 2 & 2 & 4 & 1 & 4 & 4 & 4 & 4 & 4 & 4 & 4 & 4 \\
 4 & 2 & 2 & 4 & 4 & 1 & 8 & 2 & 2 & 2 & 2 & 2 & 2 \\
 4 & 2 & 2 & 8 & 4 & 8 & 1 & 8 & 8 & 8 & 8  & 8 & 8 \\
 3 & 4 & 2 & 4 & 4 & 2 & 8 & 1 & 4 & 4 & 8 & 4 & 4 \\
 3 & 4 & 2 & 4 & 4 & 2 & 8 & 4 & 1 & 4 & 4 & 4 & 4 \\
 2 & 4 & 2 & 4 & 4 & 2 & 8 & 4 & 4 & 1 & 8 & 2 & 4 \\
 >8  & 4 & 2 & 4 & 4 & 2 & 8  & 8 & 4 & 8 & 1 & 8 & 4 \\
 2 & 6 & 4 & 4 & 4 & 2 & 8 & 4 & 4 & 2 & 8 & 1 & 3 \\
 6 & 2 & 6 & 4 & 4 & 2 & 8 & 4 & 4 & 4 & 4 & 3 & 1 \\
\end{array}
\right)
\end{eqnarray*}

The Coxeter matrix for general $n$ has $2$ for the entries corresponding to $s_{a,a+1}$ for $a \geq 4$ and any of the other generators except $s_{a+1,a+2}$ and $s_{a-1,a}$. That is to attach a linear tail to the Coxeter graph.

\end{prop}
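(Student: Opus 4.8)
The plan is to prove this exactly as the reversible and single-qubit cases were handled: realize every generator as an explicit unitary, check each is an involution, and then compute the order of every product $r_i r_j$ inside the finite-dimensional unitary group it lands in. For $n=4$ all thirteen generators are supported on the four qubits $\setof{1,2,3,4}$ --- the swaps $S_{12},S_{23},S_{34}$, the single-qubit gates $X_1,Y_1,Z_1,H_1$, the controlled gates $cX_{12},cY_{12},cZ_{12},cH_{12}$ on qubits $1,2$, and $Tof_{123},Fred_{123}$ on qubits $1,2,3$ --- so each entry $m_{ij}$ is the order of a product of two $16\times16$ unitaries, and in practice of much smaller ones once the common support is isolated. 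First I would record that $X^2=Y^2=Z^2=H^2=I$, that the controlled versions inherit $r^2=I$, and that $Tof$, $Fred$ and the transpositions are order-two permutations; this fixes $m_{ii}=1$ and confirms all generators are involutions.

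Next I would dispatch the bulk of the matrix by support: any two generators acting on disjoint sets of qubits commute, giving $m_{ij}=2$. This accounts for $S_{34}$ against the qubit-$1$ gates and for the various entries listed as $2$. The remaining pairs share qubit $1$ (or qubits $1,2,3$), and for these I would multiply the relevant small unitaries and read off the least $k$ with $(r_ir_j)^k=I$. Most return the tabulated $3,4,6,8$; for instance $H_1 X_1$ has eigenvalues $e^{\pm i\pi/4}$ hence order $8$, and $Tof_{123}Fred_{123}$ has order $3$ exactly as in the reversible matrix of the previous section, so the overlap with those computations is a useful internal consistency check.

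The one genuinely delicate entry is $m_{1,11}$, the order of $S_{12}\,cH_{12}$, recorded as $>8$. Here the product fixes $\ket{00}$ and acts on $\mathrm{span}\setof{\ket{01},\ket{10},\ket{11}}$ with characteristic polynomial $(\lambda-1)\bigl(\lambda^2+(1+\tfrac{1}{\sqrt2})\lambda+1\bigr)$; the quadratic factor has negative discriminant $\tfrac{3}{2}+\sqrt2-4<0$, so its roots lie on the unit circle with $2\cos\theta=-(1+\tfrac{1}{\sqrt2})$. The hard part is showing this never closes up, and the key observation is arithmetic rather than computational: $2\cos(2\pi p/q)$ is always an algebraic integer, whereas $-(1+\tfrac1{\sqrt2})$ is not (if it were, adding the algebraic integer $1$ would force $1/\sqrt2$ to be one, which it is not, its minimal polynomial being $2x^2-1$). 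Hence $\theta$ is not a rational multiple of $2\pi$, the eigenvalue is not a root of unity, and $S_{12}\,cH_{12}$ has infinite order --- so the entry is in fact $\infty$, and in particular $>8$. This is precisely the Hadamard-induced irrationality that makes the ambient group dense, and it is the only place where the argument leaves the realm of finite-order matrix powers.

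Finally, for general $n\geq4$ I would argue as in the reversible case: the extra generators $S_{a,a+1}$ with $a\geq4$ act only on qubits $\geq4$, hence commute ($m=2$) with all of $X_1,\dots,Fred_{123}$, while among the swaps they contribute only the type-$A$ relations, namely $m=3$ for adjacent transpositions and $m=2$ otherwise. This attaches a linear tail to the $n=4$ diagram and completes the description.
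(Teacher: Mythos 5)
Your proposal is correct, and its overall architecture is exactly the paper's: check that all thirteen generators are involutions, compute the order of each pairwise product inside the ambient unitary group (with disjoint-support pairs contributing $m_{ij}=2$), and attach the linear tail of swaps for general $n\geq 4$. The one genuine point of divergence is the certificate for the critical entry, the order of $S_{12}\,cH_{12}$. The paper works with the full $n=4$ operator: it lists the eigenvalues $\tfrac14\bigl(-2-\sqrt2\pm i\sqrt{16-(2+\sqrt2)^2}\bigr)$ and $1$ with multiplicities $4,4,8$, exhibits $2x^4+4x^3+5x^2+4x+2$ as the minimal polynomial of the complex eigenvalue, and concludes from its non-monicity that the eigenvalue is not an algebraic integer, hence not a root of unity, so the entry is $\infty$. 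You instead restrict to the $3$-dimensional invariant block, factor the characteristic polynomial as $(\lambda-1)\bigl(\lambda^2+(1+\tfrac1{\sqrt2})\lambda+1\bigr)$, and apply the integrality obstruction to the trace datum $2\cos\theta=-(1+\tfrac1{\sqrt2})$, using that $2\cos(2\pi p/q)$ is always an algebraic integer while $1/\sqrt2$ is not. The two certificates rest on the same key idea (root of unity $\Rightarrow$ algebraic integer, refuted by non-integrality), but yours is slightly more economical: it needs only a real quadratic irrational rather than a quartic minimal polynomial, and it makes the invariant-subspace structure explicit. What the paper's version buys in exchange is the full spectral data with multiplicities, which simultaneously confirms the surrounding finite entries. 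Both arguments correctly upgrade the tabulated $>8$ to $\infty$, and your cross-checks (order $8$ for $H_1X_1$, order $3$ for $Tof_{123}Fred_{123}$ matching the reversible matrix) are consistent with the paper's earlier computations.
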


\begin{proof}
These are also involutions. Again it remains to compute the order of their products. $>8$ indicates that the order of the corresponding $r_i r_j$ was $>8$. This corresponds to $s_{12} \cdot cH$ which has eigenvalues

\begin{eqnarray*}
Spec(s_{12} \cdot cH) &=& \setof{\frac{1}{4} \left(-2-\sqrt{2}+i \sqrt{16-\left(2+\sqrt{2}\right)^2}\right), \frac{1}{4} \left(-2-\sqrt{2}-i \sqrt{16-\left(2+\sqrt{2}\right)^2}\right),1}\\
\end{eqnarray*}
with multiplicities $4$, $4$ and $8$ respectively.

These are on the unit circle as they must be, but one must determine whether they are roots of unity and their orders in order to fix this entry of the Coxeter matrix. The minimal polynomial for this is $2x^4 + 4 x^3 + 5 x^2 + 4 x + 2$ so the value in concern is not an algebraic integer. Therefore the $>8$ gets replaced by $\infty$.

\end{proof}

As with the classical reversible case, take whichever generators to produce the Coxeter group whose quotient by higher order relations is desired. Or for a different generating set $\tilde{r}_i$ (also involutions), the new Coxeter matrix can be produced from this one by expanding the new generators in the old generators $\tilde{r}_i = r_{i_1} \cdots r_{i_k}$ and using this matrix to evaluate all the orders of $\tilde{r}_i \tilde{r}_j$ without doing the computation in $U(2^n)$ again.

\begin{lemma}
The group generated by all of the above except $cH_{12}$ has Serre's property FA.
\end{lemma}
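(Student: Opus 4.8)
The plan is to establish property FA directly from its definition: every action of the group $W$ (generated by all the listed involutions except $cH_{12}$) on a simplicial tree $T$, say without inversions, must have a global fixed point. The engine is two classical facts about actions on trees. First, a tree isometry of finite order is elliptic, i.e.\ it fixes a point, and the fixed-point set $T^g$ of an elliptic $g$ is a subtree. Second, Serre's lemma: if $a$ and $b$ are elliptic and $ab$ is elliptic, then $\langle a,b\rangle$ fixes a point, so $T^a\cap T^b\neq\emptyset$. Together with the Helly property for trees---finitely many pairwise-intersecting subtrees have a common point---these reduce property FA to the purely combinatorial assertion that every $m_{ij}$ is finite.

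First I would record that dropping $cH_{12}$ removes the only infinite entry of the Coxeter matrix. In the matrix of the preceding proposition the sole entry exceeding $8$ (displayed as $>8$, and argued there to equal $\infty$ because $\mathrm{Spec}(s_{12}\cdot cH)$ contains an eigenvalue that is not a root of unity) sits in the $s_{12}$ row and the $cH_{12}$ column. Deleting the row and column of $cH_{12}$ therefore leaves a Coxeter matrix all of whose off-diagonal entries lie in $\{2,3,4,6,8\}$. For general $n$ the additional generators form a linear tail of swaps $s_{a,a+1}$, $a\geq 4$, whose orders against every other generator are again finite (these are the symmetric-group and commuting relations), so no infinite entry is reintroduced.

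With finiteness of all the $m_{ij}$ in hand, the conclusion follows mechanically. Let $W$ act on a tree $T$. Each generator $r_i$ is an involution, hence elliptic, with $T^{r_i}$ a nonempty subtree. For each pair, $\langle r_i,r_j\rangle$ is dihedral of order $2m_{ij}<\infty$, so $r_ir_j$ is elliptic; Serre's lemma then gives $T^{r_i}\cap T^{r_j}\neq\emptyset$. The finite family $\{T^{r_i}\}$ thus pairwise intersects, so by the Helly property $\bigcap_i T^{r_i}\neq\emptyset$. Any point of this intersection is fixed by every generator, hence by all of $W$; as the tree was arbitrary, $W$ has property FA.

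The only genuine content is the bookkeeping of the second step: one must confirm that excising $cH_{12}$ eliminates every infinite order and that none reappears for larger $n$ (and, pedantically, handle the inversion convention by passing to a barycentric subdivision). This is also where the hypothesis is sharp, which I would flag in a short remark. Retaining $cH_{12}$ reinstates the $\infty$-label on $\{s_{12},cH_{12}\}$, and the standard parabolics then exhibit $W$ as a nontrivial visual amalgam $W_A\ast_{W_C}W_B$ with $A=C\cup\{s_{12}\}$, $B=C\cup\{cH_{12}\}$, and $C$ the remaining generators---the single cross pair having infinite order is exactly the condition validating this decomposition---so the larger group acts on the associated Bass--Serre tree without a global fixed point and fails FA. This explains why $cH_{12}$ is the one generator that must be dropped.
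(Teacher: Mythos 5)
Your main argument is correct, and it takes a route that differs from the paper's in a useful way. The paper argues in two steps: it observes, as you do, that deleting $cH_{12}$ removes the unique infinite entry of the Coxeter matrix, then cites Serre's theorem that a Coxeter group with all $m_{ij}$ finite has property FA; since the actual gate group is a quotient of that abstract Coxeter group by the unknown higher-order relations, it finishes by invoking the fact that FA passes to quotients. You instead run the fixed-point argument (finite order implies elliptic, Serre's lemma on products of elliptic elements, the Helly property for subtrees) directly on the group of unitaries itself. This is essentially inlining the proof of the criterion the paper cites, but it buys something real: because the entries $m_{ij}$ are by construction the true orders of the products $r_i r_j$ in $U(2^n)$, your argument applies verbatim to the gate group, so the paper's quotient step becomes unnecessary. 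Your version is more self-contained and sidesteps any bookkeeping about what the additional relations are.

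One caveat about your closing sharpness remark: the visual amalgam $W_A \ast_{W_C} W_B$ shows that the \emph{abstract} Coxeter group with $cH_{12}$ retained fails FA, but the lemma concerns the concrete gate group, which is a proper quotient of that Coxeter group. FA is inherited by quotients, not the other way around, so a quotient of a non-FA group may well have FA; the Bass--Serre action you describe need not descend to the gate group. Your amalgam argument therefore shows only that the \emph{method} requires dropping $cH_{12}$, not that the gate group including $cH_{12}$ actually fails FA. Flagging it as sharpness of the technique, rather than of the statement, would make the remark accurate.
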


\begin{proof}
The Coxeter matrix without that generator has all entries finite. That means the associated Coxeter group has property FA  \cite{SerreTrees}. The group of concern is then a quotient by some group normalizing the group generated by those higher order relations. Property FA is preserved under quotients.
\end{proof}

\section{Rewrite Algorithm}

Word problems for groups host a wide variety of behaviors, but can be well behaved for certain classes of groups\cite{HermillerShapiro,GrovesSmith1,GrovesSmith2}. Coxeter groups are one such class which have good word problems as studied in \cite{Hermiller,BrinkHowlett,Edelman,CoxeterBook}. This was the main motivation for this work. The constrained setting of only the gates generated by involutions allows one to write efficient programs.

Input a word $w$ in any of the Coxeter groups above as well as the associated Coxeter graph.

\begin{definition}[Dependence DAG]
Given a word of computational gates, build a DAG as follows. Start with a vertex called START. Put the first gate as a successor to START. If the next gate commutes with the first gate, put it as a successor to START. If not, it is a successor to the first gate. This continues with the only information about commutativity to decide the edges indicating potential dependency. This construction uses only the connectivity of the Coxeter matrix (ignores the edge weights on the Coxeter graph).
\end{definition}

Pick a subset $S$ of the vertices of the dependence DAG, let $H(S)$ be the subset of vertices of $S$ and all it's predecessors. Do a topological sort on the induced subgraph on $H(S)$ to give a new word in the Coxeter group which can be viewed as a prefix of $w$. Do the same for the rest which is the corresponding suffix. For these subsets also restrict to the subset of the Coxeter graph using only the generators that show up in $H(S)$. Reducing these shorter words within their own Coxeter groups is more manageable. For example, let $S$ be the vertices that come from the middle third $w_2$ of the word $w=w_1 w_2 w_3$ where each $w_i$ has approximately the same length. This choice ensures that $\frac{1}{3} \leq \frac{\abs{H(S)}}{\ell(w)} \leq \frac{2}{3}$.

Another tool to speed up the solution to the word problem is the intervening words property.

\begin{theorem}[Intervening Neighbors Property \cite{Intervening}]
Let $C$ be an infinite irreducible Coxeter group. A word in $C$ has the intervening neighbors property if for any two occurences of a generator $r_i$ all the neighbors $r_j$ of $r_i$ show up in between these two.
\par All words with the intervening neightbors property are irreducible.
\end{theorem}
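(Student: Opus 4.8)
The plan is to work in the geometric (Tits) representation and to use the standard root-theoretic criterion for a word to be reduced, arguing by contradiction against a shortest counterexample. Recall that $w = s_{x_1}\cdots s_{x_\ell}$ is reduced if and only if the ``peeled'' roots $\beta_k = s_{x_k} s_{x_{k+1}} \cdots s_{x_{\ell-1}}(\alpha_{x_\ell})$ stay positive, equivalently the reflections they name are pairwise distinct. First I would observe that the intervening neighbors property (INP) is inherited by every prefix: two occurrences of a generator inside a prefix have exactly the same letters between them as in the full word. Hence if a shortest non-reduced word $w$ with INP exists, all of its proper prefixes are reduced; in particular $u := s_{x_1}\cdots s_{x_{\ell-1}}$ is reduced and $s_{x_\ell}$ is a right descent of $u$, so that $u(\alpha_{x_\ell})$ is a negative root.

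Next I would set up the coordinate bookkeeping. Write $V = \bigoplus_i \RR \alpha_i$ with the symmetric form $B(\alpha_i, \alpha_j) = -\cos(\pi/m_{ij})$ and reflections $s_i(v) = v - 2 B(v,\alpha_i)\alpha_i$, recalling two facts: $s_i$ alters only the $\alpha_i$-coordinate of a vector, and $B(\alpha_i,\alpha_j) \ne 0$ precisely when $j$ is a neighbor of $i$, with $B(\alpha_i,\alpha_j) \le -\tfrac12$ in that case. Tracking the roots $\beta_k$ from $\beta_\ell = \alpha_{x_\ell} > 0$ down to $\beta_1 = u(\alpha_{x_\ell}) < 0$, a sign change from $\beta_{k+1} > 0$ to $\beta_k = s_{x_k}(\beta_{k+1}) < 0$ can only occur when $\beta_{k+1} = \alpha_{x_k}$, since a positive root is sent to a negative root by $s_{x_k}$ only if it is the simple root $\alpha_{x_k}$ itself. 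Choosing the largest such index $p$, this yields $s_{x_{p+1}}\cdots s_{x_{\ell-1}}(\alpha_{x_\ell}) = \alpha_{x_p}$: a product of simple reflections carries one simple root to another through a chain of positive roots.

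The contradiction comes from a monotonicity lemma for the coordinates. The coordinate of $\beta_k$ in a fixed direction changes only when its own generator is applied, and the recursion shows that whenever a neighbor of a generator is applied between two applications of that generator, the corresponding coordinate is pushed up by an amount controlled by $-2B \ge 1$. INP guarantees that between consecutive occurrences of any generator all of its neighbors do occur, so the relevant coordinate along the tracked root is non-decreasing and strictly increases across a full passage of neighbors. Since $C$ is infinite and irreducible, these increments never cancel back to the configuration of a simple root --- concretely, returning to a simple root is exactly the finite-type phenomenon, e.g.\ in $A_2$ one has $s_b s_a(\alpha_b) = \alpha_a$ because $B = -\tfrac12$ makes the boost exactly cancel, whereas for $m = \infty$ the coordinates grow without bound. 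This contradicts $\beta_{p+1} = \alpha_{x_p}$ and finishes the argument.

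The hard part will be making the monotonicity lemma precise and global. One must show that a single intervening pass of the complete neighbor set of a vertex strictly raises the tracked coordinate, and that no later reflections can bring it back down to the value of a simple root; this is where the hypotheses of infiniteness and irreducibility are genuinely used, since they rule out the finite and affine diagrams on which coordinates cycle or stay bounded. I would isolate this as a lemma on the local ``star'' action at a vertex, proved by induction on the number of intervening letters and ultimately reducing to the rank-two estimate $-\cos(\pi/m_{ij}) \le -\tfrac12$ together with the nonexistence of a highest root in infinite irreducible type.
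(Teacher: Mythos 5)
Your setup is correct and standard: INP passes to prefixes (and to contiguous subwords), the minimal-counterexample framing is fine, and the root-tracking criterion legitimately reduces the theorem to showing that, inside a word with INP, no product $s_{x_{p+1}}\cdots s_{x_{\ell-1}}$ can carry the simple root $\alpha_{x_\ell}$ to a simple root $\alpha_{x_p}$ through positive roots. But everything after that is a genuine gap, and it is exactly where the theorem lives --- you say so yourself (``the hard part will be making the monotonicity lemma precise''). Worse, the monotonicity claim is false as stated: when $s_{x_k}$ is applied, the $\alpha_{x_k}$-coordinate of the tracked root is \emph{negated} and then shifted by the neighbor contributions, $c \mapsto -c + \sum_{j \sim x_k}\bigl(-2B(\alpha_{x_k},\alpha_j)\bigr)c_j$, so it can perfectly well decrease when the neighboring coordinates are small; whether the neighbor contributions win is precisely the content of the theorem. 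Note that INP is \emph{not} sufficient in finite type: in $A_2$ the word $s_a s_b s_a s_b$ has INP (the unique neighbor intervenes between both pairs of repeated letters) yet equals $s_b s_a$, so it is not reduced. Hence no argument that only uses the local rank-two estimate $-2B(\alpha_i,\alpha_j)\ge 1$ can close, and the fact you invoke to break the finite-type cancellation, ``nonexistence of a highest root,'' is a consequence of infiniteness rather than an ingredient that plugs into your induction.

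What the actual proofs in the literature supply (Eriksson's numbers-game argument, reproduced in root-system language by Speyer) is a quantitative characterization of infinite irreducible type that your sketch never identifies: for example, the existence of a nonzero vector $\theta$ with nonnegative coordinates satisfying $B(\theta,\alpha_i)\le 0$ for every $i$ --- impossible in finite type since $B$ is then positive definite, and produced in infinite irreducible type via Perron--Frobenius. This vector serves as a potential function in a strengthened induction hypothesis (each tracked root dominates $\alpha_{x_k}$ plus a nonnegative combination of later tracked roots), and that is where both infiniteness and irreducibility do real work. Without formulating and proving such a lemma, your argument does not go through. For comparison: the paper itself offers no proof of this statement at all; it is imported by citation from \cite{Intervening} and used as a black box, so the entire burden of proof rests on the lemma you deferred.
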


So either we find the word is irreducible and we are finished or we find an $i$ such that there is a segment $r_i \cdots r_i$ without all it's neighbors. This gives a certain potentially reducible word to focus attention on. Given an algorithm to reduce these segments, this theorem speeds up reduction for infinite irreducible Coxeter groups by focusing attention on smaller segments. Infinite irreducibility can be judged directly from the Coxeter matrix.

\subsection{Additional Relations}

As stated earlier there may be additional relations. However, if we have established that a word represents the identity with only the basic Coxeter relations, we are assured that it is the identity.

After using the $r_i^2=e$ relations, all the relations that involve $2$ distinct generators must be of the form $e=r_i r_j r_i \cdots r_j$. So these are all already taken care of. The new relations involve $3$ or more distinct generators. Call the $R_{g}$ for $g$ the number of distinct generators and $R$ for all relations. A relation among $N$ generators has potential to be new information only if there exists no bipartitioning of $N = A \bigcup B$ such that $r_i$ for $i \in A$ and $r_j$ for $j \in B$ all have disjoint supports.

\begin{lemma}
One may restrict to the case of 7 (qu)bits in order to determine the new relations which involve $3$ distinct generators.
\end{lemma}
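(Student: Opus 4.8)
The plan is to exploit the locality of the group action: every generator in our lists acts as the identity outside a bounded set of (qu)bit lines, so relations can be detected on small subsystems. First I would observe that if $r_a, r_b, r_c$ are three of the listed generators, then any word $w$ in them acts as the identity on every tensor factor indexed by a line outside $\mathrm{supp}(r_a) \cup \mathrm{supp}(r_b) \cup \mathrm{supp}(r_c)$. Hence $w=e$ holds in $S_{2^n}$ (respectively in $U(2^n)$) if and only if it holds after restricting to the subsystem on the union of the three supports, since the represented operator is a tensor product of its restriction with the identity on the complement. Thus every relation among three distinct generators, and in particular every \emph{new} relation not already forced by the $r_i^2=e$ and two-generator relations, is witnessed entirely within that union.

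Next I would use the symmetric-group part of the generating set to normalize the placement of these supports. Conjugation by the adjacent transpositions $s_{i,i+1}$ merely relabels the bit lines, and the relations among these conjugations are exactly the standard relations of $S_n$, which are already understood. So without loss of generality the union of the three supports may be taken to be an initial segment $\{1,2,\ldots,m\}$, and the lemma reduces to the purely combinatorial claim that $m \le 7$.

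To bound $m$ I would split into cases by how many of $r_a,r_b,r_c$ are swaps, using that every non-swap generator on our lists has support contained in $\{1,\ldots,6\}$ (the largest being $T^{6}_{123456}$; in the quantum case no gate exceeds three qubits). If none of the three is a swap, then $m \le 6$. If exactly one is a swap, then for the triple to admit no bipartition into disjoint-support parts that swap must overlap the support of a gate, so it is no further out than $s_{6,7}$, giving $m \le 7$. The all-swaps case lives entirely inside $S_n$, whose relations are known, and after relabeling occupies at most four consecutive lines.

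The main obstacle is the remaining case of one gate together with two swaps, where a naive support count — say $T^{6}_{123456}$ with $s_{6,7}$ and $s_{7,8}$ — appears to demand eight lines while still being non-decomposable. Here I would argue that the outermost swap has support disjoint from the gate and therefore commutes with it, so in any word representing a relation it can be transported to the two ends and cancelled against its partner using only the order-$2$ relation and the standard order-$3$ adjacent-swap relation. What remains is a relation supported on $\{1,\ldots,7\}$ together with purely symmetric-group content, so no genuinely new three-generator relation can require the eighth line. Establishing this reduction — that a commuting tail of swaps never contributes information beyond the known $S_n$ relations — is the crux; once it is in place the bound $m\le 7$ follows and the lemma is proved.
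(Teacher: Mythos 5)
Your first two paragraphs correctly reconstruct the skeleton that the paper uses: a word in three generators acts trivially off the union of their supports, so relations are witnessed on that union, and the lines can then be renamed to an initial segment. The paper's own proof, however, rests on an explicit premise that you have dropped: every generator in the gate set under consideration acts on at most $3$ lines (NOT, CNOT, Toffoli, Fredkin, swaps, and the one-, two-, and three-qubit quantum gates). Under that premise the bound is a two-line count: non-bipartitionability of the three supports $S_a, S_b, S_c$ means their intersection graph is connected, hence has at least two edges, so (taking edges $ab$ and $bc$, say) $|S_a \cup S_b| \le 3+3-1 = 5$ and $|S_a \cup S_b \cup S_c| \le 5 + 3 - 1 = 7$. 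No case analysis over which generators are swaps is needed, and the anchored positions of the generators play no role.

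The gap in your proposal is one you created by enlarging the gate set to include generators of support larger than $3$, such as $T^{6}_{123456}$, to which the lemma's counting premise does not apply. That is what forces your problematic case $T^{6}_{123456},\ s_{6,7},\ s_{7,8}$ spanning $8$ lines, and your proposed repair does not work as stated: $s_{7,8}$ commutes with the gate but \emph{not} with $s_{6,7}$, so it cannot simply be "transported to the two ends and cancelled" inside an arbitrary word in the three generators. What you would actually need is that every relation in the group generated by this triple is a consequence of the Coxeter relations together with relations supported on at most $7$ lines --- for instance relations among $T^{6}_{123456}$ and its conjugate $s_{7,8}\, s_{6,7}\, T^{6}_{123456}\, s_{6,7}\, s_{7,8}$, which live on $7$ lines but whose expressions in your three generators use $8$. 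You label exactly this reduction "the crux" and leave it unproved, so the proposal does not establish the lemma. If you restrict to gate sets in which every generator has support at most $3$ --- which is what the paper assumes and what the 3-SAT application (Toffoli, CNOT, NOT, swaps) requires --- your entire case analysis collapses to the counting argument above and the difficulty disappears.
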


\begin{proof}
Each $r_i$ from the gate set operates on at most $3$ lines at once. In order for the supports not to have a bipartitioning, all together they must operate on at most $7$ qu(bits). Without loss of generality, rename these to be $1 \cdots 7$. Let $r_1 \cdots r_{M(7)}$ be those gates for $n=7$.

Generate words in exactly $3$ letters from this alphabet. Make sure they have overlapping supports. Reduce them using only the Coxeter relations. If they don't give the empty word already, check whether they give the identity in $S_{2^7}$ or $U(2^7)$ as appropriate. Keep generating longer words until such a relation is found. The relations this spawns by renaming all the (qu)bits to one of $1 \cdots n$ are the additional relations we are looking for. Let $\bar{R}_3$ be the set consisting of these as well as their cyclic permutations and their inverses. From this, one can perform Dehn's greedy algorithm on $\langle r_1 \cdots r_{M(n)} \mid \bar{R}_3 \rangle$.
\end{proof}

\begin{figure}[htb!]
\centering
\begin{tikzcd}
\langle r_1 \cdots r_M | R_2 \rangle \arrow[d]\\
\langle r_1 \cdots r_M | R_2 , R_3 \rangle \arrow[d]\\
\langle r_1 \cdots r_M | R_2, R_3, R_4 \rangle \arrow[d]\\
\langle r_1 \cdots r_M | R \rangle\\
\end{tikzcd}
\caption{We seek to reduce words in $\langle r_1 \cdots r_M | R \rangle$ which is a subgroup of either $S_{2^n}$ or $U(2^n)$. We do this by successive approximation, by staying at the top level as much as possible.}
\end{figure}

For the 3-SAT problem, this is especially useful, in that it gives a successive approximation of the semantics such that only when we have imposed all the relations in the reversible computing group, have we solved the problem. But with this filtration on the relations by number of generators involved, we may solve the word problem in gradually finer approximation which gradually get harder. The hope is that for the average case only a relatively course approximation is enough to solve it's satisfiability while the worst case is responsible for the NP hardness of the problem.

There are other techniques that can be combined with the Coxeter perspective in order to hit the problem with multiple hammers. For reversible circuits, for example, one can do a dynamic programming approach \cite{Shende}.

\section{3-SAT Redux}

If we have the circuit for the reversible computing, we may promote it to the quantum case and then reduce. This will allow the intermediate steps to take advantage of the quantum resources while still maintaining the overall computation. A prominent example is making quantum circuits that identify certain properties of SHA-256 (like first 30 bits are 0's) that are fed into Grover search algorithm as $U_f$. In this section, let us continue from the example of $Circ_L$.

\begin{lemma}[Operator Identity Circuit]

Let $C$ be a circuit that we wish to identify whether or not it is the identity operating on $n+1+\abs{Anc_L}$ number of qubits. Break it up into two subwords $C=AB$. Then the question becomes whether or not $A W=B^{-1} W$ where $W$ is an arbitrary circuit of polynomial depth.

Use these circuits to create the state $\ket{input} = (\ket{0} \otimes A_1 W_1 \ket{0 \cdots 0}  \otimes B_1^{-1} W_1 \ket{0 \cdots 0}) \otimes (\ket{0} \otimes A_2 W_2 \ket{0 \cdots 0}  \otimes B^{-1}_2 W_2 \ket{0 \cdots 0}) \cdots (\ket{0} \otimes A_k W_k \ket{0 \cdots 0}  \otimes B_k^{-1} W_k \ket{0 \cdots 0}) \otimes \ket{0}$. $C=A_i B_i$ are all divisions of the word $C$. They may or may not be chosen to be the same. The $W_i$ can be chosen independently or they might all be the same.

\begin{eqnarray*}
\ket{\phi_i} &\equiv& A_i W_i \ket{0 \cdots 0}\\
\ket{\psi_i} &\equiv& B^{-1}_i W_i \ket{0 \cdots 0}\\
SWAPTEST^{\otimes k} \otimes Id_2 \ket{input} &=& \bigotimes_{i=1}^k \bigg( \ket{0} \otimes (\frac{1}{2} \ket{\phi_i} \otimes \ket{\psi_i} + \frac{1}{2} \ket{\psi_i} \otimes \ket{\phi_i}) \\ &+& \ket{1} \otimes (\frac{1}{2} \ket{\psi_i} \otimes \ket{\phi_i} - \frac{1}{2} \ket{\phi_i} \otimes \ket{\psi_i}) \bigg) \otimes \ket{0}\\
\end{eqnarray*}

\begin{eqnarray*}
UCombiner \ket{e_1 \cdots e_k} \otimes \ket{a} &=& \ket{e_1 \cdots e_k} \otimes \ket{a + f(e_1 \cdots e_k)}
\end{eqnarray*}

where $f(e_1 \cdots e_k) = 1$ if and only if all the $e_i$ are $0$. Otherwise it is $0$. If desired do a permutation so we can write the full operator which can be applied as $Id \otimes UCombiner$. By abuse of notation we will just denote the full operator applied on the correct indices as $U$.

\begin{eqnarray*}
UCombiner \circ (SWAPTEST^{\otimes k} \otimes Id_2) \ket{input} &=& \ket{B} \otimes \ket{1} + \ket{A} \otimes \ket{0}\\
\end{eqnarray*}

where 

\begin{eqnarray*}
\ket{B} &\equiv& \bigotimes_{i=1}^k \bigg( \ket{0} \otimes (\frac{1}{2} \ket{\phi_i} \otimes \ket{\psi_i} + \frac{1}{2} \ket{\psi_i} \otimes \ket{\phi_i})  \bigg)
\end{eqnarray*}

and $\ket{A}$ is some other vector on $\mathbb{C}^{(2 \cdot 2^{n+1+\abs{Anc_L}} \cdot 2^{n+1+\abs{Anc_L}})^k}$ which is unnecessary to expand out. But it is orthogonal to $\ket{B}$. The probability of observing $1$ on the final answer qubit is 

\begin{eqnarray*}
Prob(1) &=& \prod_{i=1}^k \frac{1}{4} (2 + 2 \abs{\bra{0 \cdots 0} W_i^\dagger A_i^\dagger B_i^{-1} W_i \ket{0 \cdots 0} }^2)\\
&=& \frac{1}{2^k} \prod_{i=1}^k (1 + \abs{\bra{0 \cdots 0} W_i^\dagger A_i^\dagger B_i^{-1} W_i \ket{0 \cdots 0} }^2)\\
&=& \frac{1}{2^k} \prod_{i=1}^k (1 + \abs{ F_i }^2)\\
F_i &\equiv& \bra{0 \cdots 0} W_i^\dagger A_i^\dagger B_i^{-1} W_i \ket{0 \cdots 0}\\
&=& Tr ( \ket{0 \cdots 0} \bra{0 \cdots 0} W_i^\dagger A_i^\dagger B_i^{-1} W_i )\\
\end{eqnarray*}

An amplitude amplification can be inserted $N$ times.

\begin{eqnarray*}
\ket{B} \otimes \ket{1} + \ket{A} \otimes \ket{0} &=& \sqrt{\bra{B}\ket{B}} \frac{\ket{B}}{\sqrt{\bra{B} \ket{B}}} \otimes \ket{1} + \sqrt{\bra{A}\ket{A}} \frac{\ket{A}}{\sqrt{\bra{A} \ket{A}}} \otimes \ket{0}\\
\theta &\equiv& \cos^{-1} \sqrt{Prob(1)}\\
Q^N ( \ket{B} \otimes \ket{1} + \ket{A} \otimes \ket{0} ) &=& \cos (2N+1) \theta \cdot \frac{\ket{B}}{\sqrt{\bra{B} \ket{B}}} \otimes \ket{1} + \sin (2N+1) \theta \cdot \frac{\ket{A}}{\sqrt{\bra{A} \ket{A}}} \otimes \ket{0}\\
Prob_{ampd} (1) &=& \cos^2 (2N+1) \cos^{-1} \sqrt{Prob(1)} = T_{2N+1}^2 (\sqrt{Prob(1)})\\
Prob(1) = 1 - \epsilon &\implies& Prob_{ampd}(1)=1 - (2N+1)^2 \epsilon + O(\epsilon^2)\\
\end{eqnarray*}

\end{lemma}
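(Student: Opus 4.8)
The plan is to verify the displayed identities in sequence, since the lemma is a chain of explicit quantum-circuit computations ending in the amplitude-amplification formula. First I would recall the action of a single SWAP test: with the control qubit initialized to $\ket{0}$, conjugating the controlled-SWAP by Hadamards on the control sends the $i$-th block to $\ket{0}\otimes\frac12(\ket{\phi_i}\otimes\ket{\psi_i}+\ket{\psi_i}\otimes\ket{\phi_i})+\ket{1}\otimes\frac12(\ket{\psi_i}\otimes\ket{\phi_i}-\ket{\phi_i}\otimes\ket{\psi_i})$. Because $\ket{input}$ is a tensor product over the $k$ blocks and $SWAPTEST^{\otimes k}$ factors accordingly, tensoring these single-block outputs reproduces the displayed expression for $(SWAPTEST^{\otimes k}\otimes Id_2)\ket{input}$ with no further work.

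Next I would analyze $UCombiner$. By definition it flips the answer qubit exactly when every control register $e_i$ equals $0$. In the post-SWAP-test state the control registers range over all of $\setof{0,1}^k$; the unique summand in which all controls are $0$ is precisely $\ket{B}$, and the combiner tags it by $\ket{1}$ while leaving every other summand tagged by $\ket{0}$. Collecting the latter into $\ket{A}$ gives $U\circ(SWAPTEST^{\otimes k}\otimes Id_2)\ket{input}=\ket{B}\otimes\ket{1}+\ket{A}\otimes\ket{0}$. The orthogonality $\ket{A}\perp\ket{B}$ is then immediate: the two vectors are supported on disjoint computational-basis values of the control register (all-zero versus at-least-one-nonzero), so their inner product vanishes term by term.

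For the probability I would compute $\bra{B}\ket{B}$ block by block. Each $A_i$, $B_i$, $W_i$ is unitary, so $\ket{\phi_i}$ and $\ket{\psi_i}$ are unit vectors, whence the $i$-th factor has squared norm $\frac14(2+2\langle\phi_i\otimes\psi_i\mid\psi_i\otimes\phi_i\rangle)$, and the tensor inner product factors as $\langle\phi_i\otimes\psi_i\mid\psi_i\otimes\phi_i\rangle=\langle\phi_i\mid\psi_i\rangle\langle\psi_i\mid\phi_i\rangle=\abs{\langle\phi_i\mid\psi_i\rangle}^2$. Taking the product over the $k$ blocks gives $Prob(1)=\bra{B}\ket{B}=2^{-k}\prod_i(1+\abs{F_i}^2)$ with $F_i=\langle\phi_i\mid\psi_i\rangle=\bra{0\cdots0}W_i^\dagger A_i^\dagger B_i^{-1}W_i\ket{0\cdots0}$, and the trace form follows from the identity $\langle v\mid M\mid v\rangle=Tr(\ket{v}\bra{v}M)$.

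The last and most delicate step is amplitude amplification. Here I would restrict attention to the two-dimensional subspace spanned by the normalized good state $\frac{\ket{B}}{\sqrt{\bra{B}\ket{B}}}\otimes\ket{1}$ and bad state $\frac{\ket{A}}{\sqrt{\bra{A}\ket{A}}}\otimes\ket{0}$, which (thanks to the orthogonality just established) is invariant under the Grover-type operator $Q$ built from reflection about the good subspace and about the prepared state; on this subspace $Q$ acts as a rotation by $2\theta$ with $\theta=\cos^{-1}\sqrt{Prob(1)}$, yielding the stated formula for $Q^N$. The Chebyshev identity $T_{2N+1}(\cos\theta)=\cos((2N+1)\theta)$ then rewrites $Prob_{ampd}(1)=\cos^2((2N+1)\theta)$ as $T_{2N+1}^2(\sqrt{Prob(1)})$, and the small-$\epsilon$ expansion follows from $\cos^{-1}(1-\tfrac{\epsilon}{2})=\sqrt{\epsilon}+O(\epsilon^{3/2})$ together with $\cos^2((2N+1)\sqrt{\epsilon})=1-(2N+1)^2\epsilon+O(\epsilon^2)$. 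The main obstacle lies entirely in this step, namely confirming the invariance of the two-dimensional subspace and the exact rotation angle; but this is just the standard amplitude-amplification lemma applied to the good/bad decomposition, so once $\ket{A}\perp\ket{B}$ is in hand the remainder is bookkeeping.
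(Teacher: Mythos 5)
Your proposal is correct, and it takes essentially the approach the paper intends: the paper gives no separate proof environment for this lemma (the displayed chain of identities in the statement is its own derivation), and your verification — single-block SWAP-test action tensored over the $k$ blocks, the $UCombiner$ tagging the all-zeros branch $\ket{B}$, orthogonality of $\ket{A}$ and $\ket{B}$ from disjoint control-register support, the block-by-block norm computation giving $Prob(1)=\frac{1}{2^k}\prod_i(1+\abs{F_i}^2)$, and the standard amplitude-amplification rotation plus the Chebyshev identity $T_{2N+1}(\cos\theta)=\cos((2N+1)\theta)$ — is exactly the fleshed-out version of those computations. The only cosmetic caveat is a sign/convention ambiguity in the Grover iterate $Q$ (for odd $N$ the good-state amplitude may acquire a sign), which affects neither $Prob_{ampd}(1)=T_{2N+1}^2(\sqrt{Prob(1)})$ nor the expansion $1-(2N+1)^2\epsilon+O(\epsilon^2)$.
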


\begin{lemma}[3-SAT]
Let $C=Circ_L$, then $F_i$ has the form used in the Harish-Chandra-Itzykson-Zuber integral. $A_i^\dagger B_i^\dagger = 1 - 2 P_{-1}$ where $P_{-1}$ is a projector onto a subspace of dimension $2^{n+\abs{Anc_L}} PV$ which is exponentially small relative to the full $2^{n+1+\abs{Anc_L}}$. $F_i = 1 - 2 \bra{R} P_{-1} \ket{R}$ where $R$ is a random state.

If we choose $A_i = Circ_L$, $B_i = Id$ and $W_i \ket{0 \cdots 0} = \frac{1}{\sqrt{2^n}} \sum_{i_1 \cdots i_n = 0,1} \ket{i_1 \cdots i_n} \otimes \ket{0} \otimes \ket{0}^{\abs{Anc_L}} $, then we may give explicit values for the probabilities in terms of the number of valid assignments.

\begin{eqnarray*}
F_i &=& \frac{1}{2^n} \abs{ \text{invalid}_L } = PI\\
Prob(1) &=& \frac{1}{2^k} (1 + PI^2)^k\\
\end{eqnarray*}

where $PI$ is the probability a random assignment will not satisfy $L$. Giving an upper bound to this, provides an upper bound to $Prob(1)$ which is the probability for falsely claiming that $Circ_L = Id$. As $n \to \infty$, the upper bound for $PI$ goes to $1$. This means that the required $k$ to keep this false negative $\leq \frac{1}{3}$, blows up exponentially. Therefore this does not provide a proof of $NP \subset BQP$.

If we let the $W_i$ be words in $S_{a,a+1}$, $X_1$, $Y_1$, $Z_1$, $H_1$, $cX_{12}$, $cY_{12}$, $cZ_{12}$ $cH_{12}$, $Tof_{123}$ and $Fred_{123}$ (or some other set of generating involutions), then this will be a quite large circuit that should be fed into the Coxeter compiler before being executed.

\end{lemma}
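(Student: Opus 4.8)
The plan is to reduce every claim to the elementary cycle structure of the permutation $Circ_L$ and then substitute into the product formula of the Operator Identity lemma. First I would recall that, with the ancilla initialized to zero, $Circ_L$ acts on the computational basis by $Circ_L\ket{x}\ket{a}\ket{0} = \ket{x}\ket{a \oplus L(x)}\ket{0}$, where $L(x)\in\setof{0,1}$ records whether $x$ satisfies $L$. Being a reversible circuit that is its own inverse, $Circ_L$ is a real symmetric permutation matrix, so $Circ_L^\dagger = Circ_L^{-1} = Circ_L$ and it is an involution with spectrum in $\setof{+1,-1}$. Choosing $A_i = Circ_L$ and $B_i = \id$ gives $A_i^\dagger B_i^{-1} = Circ_L$, and writing the spectral decomposition of an involution as $Circ_L = \id - 2P_{-1}$ identifies $P_{-1}$ as the projector onto its $(-1)$-eigenspace. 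Substituting into $F_i = \bra{0}W_i^\dagger A_i^\dagger B_i^{-1} W_i\ket{0}$ and setting $\ket{R} = W_i\ket{0}$ immediately yields $F_i = 1 - 2\bra{R}P_{-1}\ket{R}$; writing $F_i = \tr(\ket{0}\bra{0}\, W_i^\dagger (A_i^\dagger B_i^{-1}) W_i)$ exhibits it as the argument $\tr(A\,U B\,U^\dagger)$ of the Harish-Chandra-Itzykson-Zuber exponential with $U = W_i^\dagger$, $A = \ket{0}\bra{0}$, $B = A_i^\dagger B_i^{-1}$, which is the first assertion.

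Next I would compute $\dim P_{-1}$ from the cycle decomposition. In the ancilla-zero sector, an input $x$ with $L(x)=0$ is a fixed point of $Circ_L$, while each $x$ with $L(x)=1$ produces the transposition $\ket{x}\ket{0}\ket{0}\leftrightarrow\ket{x}\ket{1}\ket{0}$, whose antisymmetric combination is a single $(-1)$-eigenvector. Running the same accounting across all $2^{\abs{Anc_L}}$ ancilla values (the net permutation flips the answer bit conditioned only on $L(x)$, irrespective of which ancilla configuration it threads through internally) gives $\dim P_{-1} = 2^{n+\abs{Anc_L}}PV$, with $PV$ the fraction of satisfying assignments; since $PV\leq 1$ this is a vanishing fraction of the full $2^{n+1+\abs{Anc_L}}$, and exponentially small whenever $L$ has few solutions.

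Then I would evaluate $F_i$ for the uniform choice $\ket{R} = 2^{-n/2}\sum_x \ket{x}\ket{0}\ket{0}$. The cleanest route is the direct overlap $F_i = \bra{R}Circ_L\ket{R} = 2^{-n}\sum_x [L(x)=0] = 2^{-n}\abs{\text{invalid}_L} = PI$; equivalently each antisymmetric eigenvector meets $\ket{R}$ with squared amplitude $2^{-(n+1)}$, so $\bra{R}P_{-1}\ket{R} = m\,2^{-(n+1)}$ with $m$ the number of solutions, again giving $F_i = 1 - m/2^n = PI$. Feeding the common value $F_i = PI$ into the product formula $Prob(1) = 2^{-k}\prod_i(1+\abs{F_i}^2)$ produces $Prob(1) = 2^{-k}(1+PI^2)^k$ at once.

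Finally I would draw the complexity conclusion. An unsatisfiable $L$ forces $Circ_L = \id$, $PI = 1$, and $Prob(1) = 1$, whereas a satisfiable $L$ with $m\geq 1$ solutions has $PI = 1 - m/2^n \leq 1 - 2^{-n}$; in the worst case $m=1$ one finds $2^{-k}(1+PI^2)^k = (1 - 2^{-n} + O(2^{-2n}))^k \approx e^{-k\,2^{-n}}$, so keeping the false-``identity'' probability below $\tfrac13$ demands $k = \Omega(2^n)$, and the linear-order gain $(2N+1)^2$ from the amplitude-amplification step of the previous lemma cannot close an exponential gap, which precludes $NP \subset BQP$ along these lines. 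The closing observation that $W_i$ built from the generating involutions yields a large word for the Coxeter compiler is then immediate. The step I expect to be the genuine obstacle is the dimension count for $P_{-1}$ across \emph{all} ancilla sectors: the internal ancilla are not preserved sector-by-sector during the computation, so one must argue at the level of the cycle type of the full permutation rather than the convenient ancilla-zero restriction, and verify that the conditional answer-bit flip is the only source of transpositions.
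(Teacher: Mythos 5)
Your proposal is correct in substance, but it takes a genuinely different route from the paper, and the two arguments end up covering complementary halves of the statement. The paper's proof never specializes the splitting: using only $Circ_L^2 = Id$, it shows that for an \emph{arbitrary} division $C = A_i B_i$ the operator $A_i^\dagger B_i^\dagger$ is a unitary Hermitian involution with the same spectrum as $Circ_L$ --- from $B_i^{-1} A_i^{-1} = A_i B_i$ one gets $A_i^\dagger B_i^\dagger A_i^\dagger B_i^\dagger = Id$, equivalently $A_i^\dagger B_i^\dagger = A_i^{-1}(A_i B_i)A_i = A_i^{-1}\, Circ_L \, A_i$ --- and this is exactly what justifies $A_i^\dagger B_i^\dagger = 1 - 2P_{-1}$ and the Harish-Chandra--Itzykson--Zuber form for \emph{every} choice of $A_i, B_i$, i.e.\ the first paragraph of the lemma. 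You instead fix $A_i = Circ_L$, $B_i = \id$ at the outset, which makes that first claim trivial (a symmetric permutation matrix is manifestly a Hermitian involution) but leaves the general-splitting assertion unproved; the two-line conjugation identity above is the missing repair, so this is a small gap rather than a wrong turn. In exchange, your proof establishes the quantitative second paragraph, which the paper's proof does not touch at all: the paper stops at ``spectrum $\pm 1$ with multiplicities that depend on $PI$,'' whereas you actually compute $F_i = \bra{R} Circ_L \ket{R} = 2^{-n}\abs{\text{invalid}_L} = PI$, feed it into $Prob(1) = 2^{-k}(1+PI^2)^k$, and carry out the $Prob(1) \approx e^{-k 2^{-n}}$ estimate forcing $k = \Omega(2^n)$, none of which appears in the paper's proof. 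Finally, the obstacle you flag is real: the count $\dim P_{-1} = 2^{n+\abs{Anc_L}}PV$ requires knowing how the circuit permutes the nonzero-ancilla sectors, and the construction of $Circ_L$ only pins down its action when the ancilla start at zero; the paper simply never confronts this (it makes no dimension count whatsoever). Note, though, that your evaluation of $F_i$ is immune to this issue, since $Circ_L$ preserves the ancilla-zero sector and $\ket{R}$ lies inside it, so only the $(-1)$-eigenvectors you exhibited contribute to $\bra{R}P_{-1}\ket{R}$; your direct-overlap computation is therefore the more robust of your two routes and is the one to keep.
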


\begin{proof}

\begin{eqnarray*}
Circ_L^2 = Id &\implies& A_i B_i A_i B_i = Id\\
B_i^{-1} A_i^{-1} &=& A_i B_i\\
A_i^\dagger B_i^\dagger A_i^\dagger B_i^\dagger &=& A_i^\dagger A_i B_i B_i^\dagger = Id\\
(A_i^\dagger B_i^\dagger)^{-1} &=& (A_i^\dagger B_i^\dagger) = (A_i^\dagger B_i^\dagger)^{\dagger}
\end{eqnarray*}

Therefore both $\ket{0 \cdots 0} \bra{0 \cdots 0}$ and $A_i^\dagger B_i^\dagger$ are Hermitian. The Harish-Chandra-Itzykson-Zuber formula uses the spectrum of both the Hermitian matrices. The first is a projector onto $\ket{0 \cdots 0}$ so has a single $1$ and the rest $0$. The spectrum of $A_i^\dagger B_i^\dagger$ is:

\begin{eqnarray*}
Spec ( A_i^\dagger B_i^\dagger ) &=& Spec (A_i A_i^\dagger B_i^\dagger A_i^\dagger)\\
&=& Spec (B_i^\dagger A_i^\dagger)\\
&=& Spec (A_i B_i) = Spec (Circ_L)\\
\end{eqnarray*}

which is $\pm 1$ with multiplicities that depend on $PI$.

Also note that the problem of $Circ_L^2 = Id$ and we wish to test whether $Circ_L = Id$ also implies the same for $A_i^\dagger B_i^\dagger$. This means that instead of just dividing $Circ_L$ into two pieces, we can repeat this process and shuffle more drastically. These are similar to the original $Circ_L$ so if $W_i$ is truly uniformly drawn, this does not make a difference.

\begin{eqnarray*}
A_i^\dagger B_i^\dagger A_i^\dagger B_i^\dagger &=& A_i^\dagger A_i B_i B_i^\dagger\\
A_i B_i = Id &\implies& A_i^\dagger B_i^\dagger = Id\\
\end{eqnarray*}

\end{proof}

\section{Expander Graphs}

Consider the Coxeter graph. If we divide the vertices of this graph into two subsets $A$ and $A^c$ and have the compiler take care of those individual parts, the amount of interaction between these two is measured heuristically by the number of edges connected them (without the weighting of the edges). In addition to have this be a good parallel division, we would like to make those $A$ and $A^c$ as close to half the vertices as possible. This leads to the idea that the Cheeger constant of the Coxeter graph is a heuristic for good parallelizations.

\begin{definition}[Cheeger Constant]
\begin{eqnarray*}
h(G) &=& min \frac{E(A,A^c)}{min(\abs{A},\abs{A^c})}
\end{eqnarray*}

where $E(A,A^c)$ counts edges connecting $A$ and $A^c$. Let there be a family of graphs $G_n$. If for all $n$ $h(G_n) \geq c > 0$ then that is called an expanding family.

\end{definition}

Note this measuring the expansion property on a graph where the vertices are operations not the qubits themselves. One can phrase this generally as letting $H$ be the multi-hypergraph with qubits as vertices and possible gates as hyperedges. Then one constructs a graph $G_H$ by letting the vertices be indexed by the hyperedges. These are not connected if and only if the associated operations commute. Similar questions of constructing the line graph of a hypergraph and asking about it's expansion properties are addressed in \cite{Badaoui} (thanks to \cite{Arnaud}).

\section{Conclusion}

For certain sets of both reversible and quantum gates we have presented the Coxeter group generated by imposing only the relations of the form $(r_i r_j)^{m_{ij}}$. This gives groups with good rewrite properties in order to make compilation of circuits more efficient. This procedure is also easily generalizable by taking any set of gates that are involutions and forming a corresponding Coxeter group as a coarse approximation of the semantics.

As a particular example, we gave a circuit for 3SAT as well as identifying whether or not that circuit was the identity. This does not give a proof of $NP \subset BQP$ because we have not gotten the probability of false negatives below $\frac{1}{3}$. Such a proof would imply a collapse of the polynomial hierarchy to $\Sigma_2^P$ \cite{BQPPH}. The word problem in groups satisfying a perimetric condition is in $NP$ \cite{BirgetOlshanskiiRipsSapir}. $NP \subset BQP$ would imply a quantum computer would have an algorithm $A$ that would input a program $P$ to make sure it was told to do was doing useful work. Such a subroutine would go into a truly quantum optimizing compiler. In addition, if there exists a $\Sigma_2^P$ complete problem not in $BQP$, $NP \subset BQP$ would imply $P \neq NP$.

A manageable way to expand this would be to include generators of orders $4$ or $8$ in order to handle other gates in the Clifford+T groups such as the $S$ and $T$. We hope the rewrite programs for the Coxeter subgroups can be used to aid the rewrite programs for the full Clifford+T group. If $K$ and $Q$ have finite complete rewriting systems, then a $G$ which fits in between with a short exact sequence will also have one \cite{GrovesSmith1}. This property will be useful in developing the compiler for larger gate sets.

If this algorithm was sufficiently efficient, the user of such a device would only have to input a very naive sequence of gates to solve the problem and then let the word problem take care of the rest. We could even bootstrap this compiler by having the subroutines of finding subgraphs in the computational DAG into quantum programs themselves. This involves combining with techniques from \cite{QuantumDynProgDAG,GiacomoPark,VenturelliDoRieffelFrank} to improve the scheduling of gates.

\bibliographystyle{unsrt}
\bibliography{CoxeterCompiler}

\end{document}